\newlist{equilibria}{enumerate}{1}
\Crefname{equilibriai}{Equilibrium}{Equilibria}
\setlist[equilibria,1]{label=(\arabic*),ref=(\arabic*)}
\Crefname{equation}{Equation}{Equations}
\Crefname{footnote}{footnote}{footnotes}
\theoremstyle{plain}
\newtheorem{theorem}{Theorem}[section]
\newtheorem{lemma}[theorem]{Lemma}
\newtheorem{proposition}[theorem]{Proposition}
\theoremstyle{definition}
\newtheorem{example}[theorem]{Example}
\newtheorem{remark}[theorem]{Remark}
\newcommand{\I}{I}
\renewcommand{\S}{S}
\begin{document}
\title{Playing on a Level Field:\texorpdfstring{\\}{ }Sincere and Sophisticated Players in the Boston Mechanism with a Coarse Priority Structure}
\author{Moshe Babaioff \and Yannai A. Gonczarowski \and Assaf Romm\thanks{First draft: February 2018.
Babaioff: Microsoft Research, \emph{email}: \href{mailto:moshe@microsoft.com}{moshe@microsoft.com}.
Gonczarowski: Microsoft Research, \emph{email}: \href{mailto:yannai@gonch.name}{yannai@gonch.name}; the research was conducted while Gonczarowski was affiliated also with the Hebrew University of Jerusalem.
Romm: Stanford University and the Hebrew University of Jerusalem, \emph{email}: \href{mailto:assafr@gmail.com}{assafr@gmail.com}; the research was conducted while Romm was co-affiliated with Microsoft Research.
This paper greatly benefited from discussions with \mbox{Scott Duke Kominers}, \mbox{Tal Lancewicki}, \mbox{D\'eborah Marciano}, \mbox{Alvin Roth}, and \mbox{Ran Shorrer}. \mbox{Yannai Gonczarowski} was supported by the Adams Fellowship Program of the Israel Academy of Sciences and Humanities; his work was supported by Israel Science Foundation (ISF) grant 1435/14 administered by the Israeli Academy of Sciences, by United States-Israel Binational Science Foundation (BSF) grant 2014389, and by the European Research Council (ERC) under the auspices of the European Union's Horizon 2020 research and innovation program (grant 740282). The work of \mbox{Assaf Romm} is supported by the Maurice Falk Institute, by Israel Science Foundation (ISF) grant 1780/16, by United States-Israel Binational Science Foundation (BSF) grant 2016015, and by Koret Young Israeli Scholars Program.}}
\date{June 9, 2020}

\maketitle

\begin{abstract}
Who gains and who loses from a manipulable school-choice mechanism? Studying the outcomes of sincere and sophisticated students under the manipulable Boston Mechanism as compared with the strategy-proof Deferred Acceptance, we provide robust ``anything-goes'' theorems for large random markets with coarse priority structures. I.e., there are many sincere and sophisticated students who prefer the Boston Mechanism to Deferred Acceptance, and vice versa. Some populations may even benefit from being sincere (if also perceived as such). Our findings reconcile qualitative differences between previous theory and known empirical results. We conclude by studying market forces that can influence the choice between these mechanisms.
\end{abstract}


\section{Introduction} \label{sec:introduction}
School districts all over the world are increasingly realizing the benefits of letting parents choose the educational environment that best fits their children. The move from a zoning policy to a choice-oriented process, together with the scarcity of seats in some of the most highly demanded schools, calls for a regulated procedure or an algorithm for assigning seats. The Boston mechanism (henceforth BM), also known as Immediate Acceptance, is one of the most popular mechanisms for seat allocation. The appeal of BM lies in its simplicity in terms of both intuition and implementation. The mechanism, which also takes the parents' preferences into account, first maximizes the number of students who get their first choice, then maximizes the number of students who get their second choice, and so on. When performing this step-by-step maximization, the mechanism selects who gets admitted to overdemanded schools according to some (possibly school-specific) priority ordering. This process is so simple that in some small municipalities it is carried out manually using spreadsheet software.

Despite its seemingly straightforward description, the drawback of BM is that it is susceptible to strategic manipulation. Students (or parents) who carefully consider the workings of the mechanism can submit a rank-order list (ROL) that does not represent their true preferences, but may help them to be admitted to one of their (true) top schools. This raises two closely related concerns: first, do students benefit or are they harmed from using a manipulable mechanism and, second, do students who are more sophisticated and more informed get an unfair advantage in public-school admissions? A good way to measure gain and loss in this context is to compare the outcome of BM with that of the student-proposing Deferred Acceptance algorithm (\citealp{gs1962}; henceforth DA), which is strategy-proof and therefore does not give an advantage to sophisticated students \citep{df1981,roth1982}.\footnote{The comparison of BM to DA is appealing also because BM was replaced by DA in several school choice systems that were redesigned by economists, such as in Boston \citep{aprs2005} and New York City \citep{apr2009}.} We say that a student receives a positive \emph{absolute gain} (from BM) if her expected utility under (an equilibrium of) BM is higher than her utility under (the dominant-strategy equilibrium of) DA. We say that a student receives a \emph{relative gain} with respect to another student (or to a counterfactual version of the same student) if the absolute gain of the former is higher. Conveniently, a comparison of the relative gains between two types of the same student who differ only in their level of sophistication reduces to a comparison of their expected utilities under BM, since their outcome under DA is the same.

The first to approach these questions were \citet{ps2008}. They show positive absolute gain for sophisticated students, by demonstrating that a sophisticated student weakly prefers her outcome under the Pareto-dominant equilibrium of BM to her outcome under DA.\footnote{For sincere students the comparison is ambiguous.} They then prove positive relative gains for sophisticated types compared with sincere types of the same player. That is, holding everything else fixed and focusing on the Pareto-dominant equilibria of BM, a player is weakly better off being sophisticated rather than sincere.
	
It should be noted that both of these results require that schools have strict priority orderings over students---an assumption that is often unrealistic. In most cases, schools have a coarse priority structure and ties are resolved using a random tie-breaking rule. For example, children who have siblings attending a specific school may be given higher priority, but among children with such siblings there is no strict order, nor is there one among children who do not have siblings at that school. Allowing for a coarse priority structure, \citet{acy2011} first prove strictly positive absolute gains when all players are sophisticated, a result that is driven by the potential of BM to better express cardinal utility levels through preference reports.\footnote{\citet{miralles2009} states a similar result in a model with no sincere students.} \citeauthor{acy2011} then show that in the presence of both sophisticated and sincere students, sincere students may potentially experience positive absolute gains as well, as they benefit from their strategic peers' demand shading under BM, and gain higher probability of being admitted to one of their top schools (compared with the DA outcome).\footnote{\citet{ab2012} further demonstrate that positive absolute gains are possible for sincere students even prior to knowing their utility levels from being assigned to different schools.} That being said, in their model it is always better to be sophisticated (by assumption, since sophistication levels are random and unobserved), and thus their model always predicts nonnegative relative gains for sophisticated types compared with sincere types. Their results crucially rely on all players having common ordinal preferences, and on specific symmetry assumptions: in their model, cardinal utility levels are drawn i.i.d.\ from a random distribution, players are sincere or sophisticated with equal probability, and the results apply only to symmetric equilibria. As noted by \citet{troyan2012}, the welfare comparisons of \citeauthor{acy2011} are also sensitive to the assumption that there is only a single priority class, an assumption that mandates symmetric tie-breaking.

This paper complements, extends, and hopefully clarifies existing results on the absolute and relative gains of sincere and sophisticated students. Regarding relative gains, we observe that under weak priorities, being sincere can sometimes be an advantage rather than a liability. In fact, it is a likely situation for many students even in large random markets.\footnote{We study markets that are ``large'' in the sense that there are at the same time many schools and many students \citep[as in][]{im2005,kp2009}. A different model used in the literature is one with many students (a continuum) but a fixed number of schools \citep{al2016}. This paper does not attempt to study the latter model.} The intuition is that a sincere student commits to list her top schools first, even if they are overdemanded, and in doing so she crowds out the competition more than a sophisticated student would. That is, other students who are sophisticated inevitably take the sincere student's commitment into account and in doing so they rationally move toward ranking other schools higher. The question of whether a student is better off being sincere or sophisticated is thus similar to the decision faced by a Stackelberg leader in a meta-game in which she can either commit to one specific strategy (be sincere) or not commit to any strategy and play according to a Nash equilibrium profile (be sophisticated).

Our first main result, \cref{some-sophisticated-then-advantage-and-disadvantage}, demonstrates that the first-mover advantage of sincere students is prominent even in large random markets. Our model employs a random market-generating process that produces markets with heterogeneous preferences and varying demands by mixtures of sophisticated and sincere students. We show that there is in expectation a constant fraction of students who strictly prefer to be sincere.\footnote{Perhaps less surprisingly there is also a constant fraction of students who strictly prefer to be sophisticated.} Nevertheless, the sincerity advantage under BM is sensitive both to market structure and to the extent that the players' sophistication is common knowledge. Roughly put, the overall effect of being sincere as compared with being sophisticated is comprised both of the negative effect of not responding to excess demand (identified by \citealp{ps2008}), which manifests in both strict-priority and weak-priority environments, and of the positive effect of crowding out the competition, which is completely absent in strict-priority environments. For example, if two or more of each student's top schools are likely to be overdemanded, sincerity will allow a student to crowd out the competition in her top school. However, a sincere student who does not get her first choice pays an implicit cost of not having ranked her second choice first, which could potentially have represented a better trade-off between utility and admission probability. This nonresponsiveness to the excess demand of the top-choice school may or may not overshadow the benefit of the crowding-out effect.

Importantly, this result should not be narrowly interpreted. First, while the result in the main text is formulated with complete information on students' sophistication levels, it in fact holds even in the presence of incomplete information (see \cref{apndx:incomplete}). Second, we view this result as a statement of the relative gains not of individual students, but of distinguishable populations. In the real world, only some of each student's traits (e.g., gender, neighborhood, socioeconomic status) are observable by others. Sophisticated students who face a student with certain traits should evaluate her probability of being sincere based on prior information on all students with similar traits. Therefore, while it is (almost tautologically) better to be sophisticated when holding everything else (including others' beliefs regarding oneself) fixed, it is possible that an entire population will benefit from most or even all of its members being sincere, provided that the sophisticated players know that the population is such. This is true to the extent that the sincere members of such a mixed population may be better off than the sophisticated members of a pure population. It is important to note that another possible reason for wanting to belong to a sincere population is that in real markets one may often face members of one's own community, and their sincerity may play to one's advantage. In order to abstract away from this effect of benefiting from the actions of others whose sincerity level is correlated with one's own, we assume in the main text that each population is represented by a single student. This ensures that our results are not driven by any interaction between two students of the same population. We further simplify the arguments by treating sophistication as a binary variable. \cref{apndx:incomplete} provides details on relaxing these restrictions.

Following our study of the relative gains, we turn to study the absolute gains of each type of student. Our second main result compares equilibria of BM with the DA outcome and shows that while there are indeed many students who prefer DA to BM, there are also many students who have the opposite preference, and that this prediction holds true for both sophisticated and sincere students. The key insight here is that, compared with DA, BM~may reduce students' opportunities to compete for desirable (overdemanded) schools. This effect, which is once again absent in strict-priority environments, may in fact overcome the reduced-competition effect that \citet{ps2008} identify. We show that in large random markets there is in expectation a constant fraction of students (either sincere or sophisticated) who strictly prefer the DA outcome to any BM equilibrium, as well as a constant fraction of students who have the opposite strict preference.

An important aspect of our work is that it can explain the very different empirical results that have been documented and analyzed in recent years. Studying elementary-school choice in Cambridge, MA, \citet{as2018} show absolute gains for both sophisticated and sincere players, with sophisticated students gaining more relative to sincere students. \citet{cfg2018} present qualitatively similar results from elementary-school choice in Barcelona. \citet{he2017} uses data on middle-school choices in a neighborhood in Beijing and that is the only paper that we know of that finds relative gains for sincere students compared with sophisticated students. \citet{knz2020} do not explicitly separate agents into types, but argue that subjective (and possibly inaccurate) beliefs about demand could lead to negative absolute gains. \Citet{dgov2015} use data from Amsterdam to show negative absolute gains (i.e., students get higher expected utilities under DA).

We note that all of our arguments and results apply also beyond the plain vanilla BM, and in fact hold for virtually all known variants of BM: the Corrected Boston Mechanism \citep{miralles2009}, the Modified Boston Mechanism \citep{dur2019}, the Secure Boston Mechanism \citep{dhm2019}, and the Adaptive Boston Mechanism \citep{ms2017}. In fact, they even hold for the entire family of First-Choice Maximal mechanisms \citep{dms2018}. Our proofs do not directly apply to manipulable mechanisms outside this family, such as DA with bounded lists (when the bound is binding and prevents agents from playing truthfully).

The paper proceeds as follows. After discussing related works, we present the model in \cref{sec:model}, the results on relative gains in \cref{sec:sincere}, and the results on absolute gains in \cref{sec:bm-da}. A discussion of the applicability of the results to realistic environments follows in \cref{sec:what_matters}, and \cref{sec:discussion} concludes. Some extensions are presented in \cref{apndx:incomplete}, and some proofs and calculations are relegated to \cref{apndx:proofs,apndx:sincerity_bad}.
	
\subsection{Additional Related Literature} \label{subsec:lit_review}	
The mechanism-design approach to school choice begins with \citet{as2003}. Since then much of the academic literature has focused on the application of DA to various environments. BM itself was brought to the attention of economists by \citet{aprs2005}, who redesigned Boston Public Schools' existing mechanism and replaced it with~DA.
	
\citet{aprs2005} noted that BM was prone to strategic manipulation, and pointed to anecdotal evidence suggesting that indeed some parents in Boston acted strategically. Following that, strategic behavior was demonstrated in both experimental labs (e.g., \citealp{cs2006}, and many follow-up designs) and real-world environments (see, e.g., \citealp{cg2014}). \citet{es2006} show that when priorities are strict and all students are sophisticated, moving from BM to DA is weakly beneficial for all students.\footnote{See also \citet{kojima2008} for similar results under substitutable priority structures.} As mentioned, \citet{ps2008} generalize this statement, claiming that the strategic manipulability of BM also has fairness implications, as it gives an advantage to sophisticated students. An experimental finding along the same lines was recently presented by \citet{bm2016}. Similar (if inconclusive) theoretical results for an environment with boundedly sophisticated (level-$k$) players are described by \citet{zhang2016}. \citet{dhm2018}\ study strategic behavior of students and its implications in the field. \citet{av2018} note that acting sincerely is not necessarily related to being strategically unsophisticated or suffering from lack of information; it can also be the result of having better outside options.

Strategic sophistication (or lack thereof) and its effect on designing matching markets has recently been at the focus of a number of works that study preference misrepresentation under the strategy-proof DA. Experimental evidence for preference misrepresentation have existed for more than a decade, starting with the pioneering work of \citep{cs2006}. More recently, \citet{rees2018} and \citet{rs2018} studied this phenomenon in the American market for new medical residents, and a few other authors pointed at field evidence for preference misrepresentation in college admission markets \citep{ach2017,hrs2020,ss2018}. Some explanations for this behavior are suggested by \citet{hmrs2017}, and more formal treatments of some of them are provided by \citet{ag2018} and \citet{dhr2019}.

Finally, this paper deals with the effects of weak priorities on the workings of BM. It thus contributes to a recent line of works that deal with weak priorities and tie-breaking methods for other school choice mechanisms, such as DA (e.g., \citealp{apr2009,arnosti2015,an2016,anr2019,ee2008,kesten2010}) and Top Trading Cycles (e.g., \citealp{as1998,carroll2014,ps2011}).

\section{The Model} \label{sec:model}	

\paragraph{Schools, Students, and Preferences}
We adopt much of the notation used by \citet{ps2008}. There is a finite set of \emph{schools}, $\S = \{s_1,\dots,s_m\}$ and a finite set of \emph{students}, $\I = \{i_1,\dots,i_n\}$. Each school $s_j$ has a \emph{capacity} $q_{s_j}$. The (vNM) \emph{utility} of a student $i\in\I$ from being assigned to school $s\in\S$ is~$u_i(s)$, the utility from being unmatched is (normalized to) zero, and we assume that students are risk-neutral. Throughout the paper, we assume that the \emph{priority structure} of the schools is coarse to the extent that all students belong in the same priority class in all schools. While this is a simplifying assumption, it is not far from the properties of many real-life matching markets, and the resulting phenomena efficiently convey our main messages.

\paragraph{The Boston Mechanism} Each student must report an ROL to BM, and then the mechanism is run as follows:
\begin{enumerate}
\item[0.]
For each school $s$, a strict ordering of students $\tau_s$ is drawn uniformly at random from the set of all permutations over\footnote{\label{mtb-stb}This is the \emph{multiple tie-breaking} rule (MTB). However, all our results remain qualitatively the same for other random tie-breaking rules, including the widely used \emph{single tie-breaking} rule (STB) under which all schools use the same random ordering of students.} $\I$. This ordering remains unknown to the students.
\item
Each student applies to the school that she ranked as her first choice. A school whose capacity is at least the number of students who applied \emph{permanently} admits all of them. A school $s$ whose capacity is less than the number of students who applied \emph{permanently} fills its capacity with a subset of these applicants, who are chosen according to $\tau_s$.
\item
Each student who was not admitted in the first round applies to her second choice. A school whose \emph{remaining capacity} (taking into account the slots taken by all of the students admitted in the first round) is at least the number of students who applied in this round \emph{permanently} admits all of them. A school $s$ whose remaining capacity is less than the number of students who applied in this round \emph{permanently} fills its remaining capacity with a subset of these applicants, who are chosen according to $\tau_s$.
\end{enumerate}
\begin{minipage}{\textwidth}\vspace{-.25em}\centering$\vdots$\vspace{.25em}\end{minipage}
\begin{enumerate}
\item[$k$.]
Each student who was not admitted in the previous rounds applies to her $k$th choice. A school whose remaining capacity (taking into account the slots taken by all of the students admitted in all previous rounds) is at least the number of students who applied in this round \emph{permanently} admits all of them. A school $s$ whose remaining capacity is less than the number of students who applied in this round \emph{permanently} fills its remaining capacity with a subset of these applicants, who are chosen according to $\tau_s$.
\end{enumerate}
\begin{minipage}{\textwidth}\vspace{-.25em}\centering$\vdots$\vspace{.25em}\end{minipage}

\paragraph{The Deferred-Acceptance Mechanism} Each student must report an ROL to DA, and then the mechanism is run as follows:
\begin{enumerate}
\item[0.]
For each school $s$, a strict ordering of students $\tau_s$ is drawn uniformly at random from the set of all permutations over\footnote{See \cref{mtb-stb}.} $\I$. This ordering remains unknown to the students.
\item
Each student applies to the school that she ranked as her first choice. A school whose capacity is at least the number of students who applied \emph{tentatively} admits all of them. A school $s$ whose capacity is less than the number of students who applied \emph{tentatively} admits a subset of these applicants who fill its capacity and who are chosen according to~$\tau_s$, and \emph{permanently} rejects all other applicants.
\end{enumerate}
\begin{minipage}{\textwidth}\vspace{-.25em}\centering$\vdots$\vspace{.25em}\end{minipage}
\begin{enumerate}
\item[$k$.]
Each student applies to her favorite school among those that have not rejected her yet. (Thus a student who was tentatively admitted to a school in the previous round reapplies to the same school in this round.) A school whose capacity is at least the number of students who apply in this round \emph{tentatively} admits all of them. A school $s$ whose capacity is less than the number of students who apply in this round \emph{tentatively} admits a subset of these applicants who fill its capacity and who are chosen according to $\tau_s$, and \emph{permanently} rejects all other applicants.
\end{enumerate}
\begin{minipage}{\textwidth}\vspace{-.25em}\centering$\vdots$\vspace{.25em}\end{minipage}

\vspace{\topsep}
The mechanism terminates when a round with no rejections occurs, following which all tentative admissions from this round become permanent. The resulting outcome is the student-optimal stable outcome.

\paragraph{Sincerity and Sophistication} Students are either sincere or sophisticated. \emph{Sincere} students truthfully report an ROL according to their utilities, while \emph{sophisticated} students can submit any ROL regardless of their utilities. When analyzing the Boston mechanism, we look at a Nash equilibrium of the preference-reporting game among the sophisticated students. (We assume truthful reporting under DA, as this constitutes a dominant-strategy equilibrium.)

\paragraph{Utilities}
While some of the examples in this paper use specific given utilities for each student, our main results apply to the \emph{uniform $(n;u^1,\ldots,u^k)$ model}, which we now describe. In this model, which is defined by fixed utilities $u^1>\cdots>u^k$ and a size $n$, there are $n$ students and $n$ schools, where each of the schools has a capacity of exactly $1$. For each student, we draw ordinal preferences uniformly at random, and set the utility for this student from being matched to a school that she ranks in place $\ell$ to be $u^\ell$. Formally, for each student $i\in\I$ we independently and uniformly draw $k$ distinct schools $s_{\pi_1},\ldots,s_{\pi_k}$, and set $u_{i}(s_{\pi_1})=u^1$, $u_{i}(s_{\pi_2})=u^2$, etc., where $u^\ell=0$ for all $\ell>k$. The short-list assumption is mostly for technical convenience, and most results and proofs are easily adapted to the case of unbounded lists. Specifically, we do \emph{not} rely on having many vacant schools in the market or anything of that nature. The sole exception is \cref{some-sophisticated-then-bm-da}, whose first part requires the length of the ROLs to be bounded by some arbitrary fixed number.

While utilities are randomly generated, we focus on games with complete information regarding players' (realized) preferences and sophistication levels.\footnote{In \cref{apndx:incomplete} we do analyze a more general model with incomplete information on sophistication levels.} The only uncertainty that players face is due to the random tie-breaking rule used in both mechanisms. 

\section{Relative Gains: Sincere vs.\ Sophisticated}\label{sec:sincere}
Our first main result speaks to the indeterminacy of the relative gains of sophisticated students as compared with sincere students under BM. As mentioned, under DA both types get the same utility. Therefore, we only need to verify how many students prefer to be sophisticated and how many prefer to be sincere under BM. 

We show that in large random markets there are many students who prefer being sophisticated, and many who prefer being sincere. This is thus an ``anything goes'' kind of theorem that stands in sharp contrast to the case of strict priorities, where, as \citet{ps2008} show, each student weakly prefers being sophisticated. Formally, we prove that in large random markets, for a large family of cardinal utility levels, and for any nontrivial proportion of sincere and sophisticated students, the expected number of students who prefer being sincere is linear in $n$, as is the expected number of students who prefer being sophisticated.

\begin{theorem}[Relative Gains are Often Positive for Some and Negative for Others]
\label{some-sophisticated-then-advantage-and-disadvantage}
Let $k=2$ and let $u_1>u_2 > \frac{2}{3}\cdot u_1 > 0$. For any $0 < p < 1$, there exists $\tau>0$ such that for any large enough $n$, when each student is sincere with probability $p$ independently of other students, in the uniform $(n; u^1,u^2)$ model both of the following hold:
		\begin{enumerate}
			\item\label[part]{some-sophisticated-then-advantage-and-disadvantage-sincere-prefer-sophisticated}
			There exists a set of sincere students of expected\footnote{The expectation is taken over preferences and sophistication levels.} size at least $\tau n$, such that each student in this set strictly prefers any equilibrium had she been sophisticated to any equilibrium (in which she is sincere). Furthermore, each student in this set maintains this strict preference regardless of whether or not any other students become sophisticated and/or sincere.
			\item
			For any equilibrium (in which only the sophisticated students strategize), there exists a set of sophisticated students of expected\footnote{The expectation is taken over preferences and sophistication levels (for any given mapping from realized preferences and sophistication levels to equilibria).} size at least $\tau n$, such that each student in this set strictly prefers any equilibrium had she been sincere to the given equilibrium. Furthermore, each student in this set maintains this strict preference regardless of whether or not any other students in this set become sincere and regardless of whether or not any sincere students become sophisticated.
		\end{enumerate}
	\end{theorem}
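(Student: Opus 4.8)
The plan is to reduce each part of the theorem to a finite list of constant-size ``gadgets''---local configurations of preferences whose relevant schools are demanded only by the students inside the gadget, so that, by a locality argument, the play and utilities of a gadget's members are, in any equilibrium, unaffected by the rest of the market. Concretely I would first establish such a locality lemma: if a set $B$ of schools is, in every equilibrium, never ranked by any student outside a set $A$ of students, then the restriction of the outcome to $(A,B)$ depends only on the subgame induced by $A$ and $B$. This reduces the analysis of each gadget to a stand-alone game. I would then check that in the uniform $(n;u^1,u^2)$ model each gadget occurs, centered at a uniformly random student, with probability bounded below by a positive constant: a gadget built from $g$ students whose ``coincidence'' constraints (one student's $\ell$-th choice coinciding with another's $\ell'$-th choice) form a spanning tree on those $g$ students occurs in expectation $\Theta(n^{g-(g-1)})=\Theta(n)$ times, and further prescribing the sophistication type of the center and of a few designated members costs only a constant factor, $p$ or $1-p$. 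Distinct gadget copies use disjoint schools and hence disjoint students, so linearity of expectation yields the asserted bound $\tau n$.

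For Part~(1) the gadget is a sincere student $i$ whose first choice $s_1$ is also the first choice of exactly two other students, each of whom has a privately demanded second choice, and whose second choice $s_2$ is the first choice of exactly one other student, again with a privately demanded second choice. In any equilibrium, and regardless of the types of these four other students, each rival at $s_1$ strictly prefers to apply first to $s_1$ (obtaining it with probability at least $1/3$ and her private fallback otherwise---worth strictly more than the fallback alone because $u^1>u^2$), so $i$ together with the two rivals apply to $s_1$ in round~$1$; likewise the rival at $s_2$ secures it in round~$1$. Hence sincere $i$ gets exactly $u^1/3$. Were $i$ sophisticated she could guarantee $u^2/2$ by listing $s_2$ first (against at most one rival there she wins $s_2$ with probability at least $1/2$), so in every equilibrium she obtains at least $u^2/2$. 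Since $u^2>\tfrac23u^1$ is precisely $u^2/2>u^1/3$, this yields the claimed strict preference, and---the argument being insensitive to the other students' types and, by locality, to the rest of the market---it holds robustly in every sense claimed.

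For Part~(2) the gadget implements the Stackelberg/crowding-out intuition. Take a sophisticated student $i$ with first choice $s_1$ and second choice $s_2$; let $s_1$ be the \emph{second} choice of two sophisticated students $j_1,j_2$, let each $j_\ell$'s first choice be a three-way contest (that school is also the first choice of two further students, each with a privately demanded fallback), and let $s_2$ be a two-way contest ($i$ plus one further student with a privately demanded fallback). The key is that $j_\ell$'s incentive to ``shade'' onto $s_1$ depends on how crowded $s_1$ is: against a two-way $s_1$ she prefers shading ($u^2/2>u^1/3$, using $u^2>\tfrac23u^1$), but against a three-way $s_1$ she prefers her own contested top ($u^1/3>u^2/3$). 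Consequently the stand-alone game has an equilibrium in which $i$ shades onto $s_2$ (applying to $s_1$ instead would make it three-way, worth only $u^1/3<u^2/2$) while $j_1,j_2$ both shade onto $s_1$, so $i$ gets only $u^2/2$; whereas once $i$ is forced to be sincere she commits to $s_1$, and then at most one of $j_1,j_2$ shades there (the other strictly prefers her own top once $s_1$ would be three-way), leaving $i$ a two-way contest and payoff $u^1/2>u^2/2$. Thus in any equilibrium that restricts, on this gadget, to the ``both rivals shade'' configuration, $i$ strictly prefers to have been sincere; and this survives flipping other members of the set (they sit in disjoint gadgets) and flipping sincere students (the helpers apply to their first choice first regardless of type), while it correctly does \emph{not} survive flipping $j_1$ or $j_2$ to sincere---matching the restricted robustness stated in the theorem.

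The step I expect to be hardest is making Part~(2) robust to equilibrium selection. The stand-alone gadget above \emph{also} admits a ``good'' equilibrium in which $i$ commits and gets $u^1/2$, so a single gadget family does not rule out a global equilibrium that places every copy into its good configuration, which would leave the desired set empty. I plan to resolve this either by combining several gadget families so that, for every equilibrium, a constant fraction of the copies of at least one family is forced into its adverse configuration, or by an aggregate count---e.g.\ bounding the number of students who can be matched to their first choice and playing the two gadget configurations against this bound---to show that no equilibrium can ``save'' all gadgets at once. A secondary point is parameter coverage: the multiplicities in the gadget (how crowded $s_1$ and the $j_\ell$'s tops are) must be tuned so that the required inequalities hold throughout $u^2\in(\tfrac23u^1,u^1)$, which I expect to handle with a finite family of gadgets indexed by these multiplicities. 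Routine but necessary checks include that a sophisticated student never benefits from ranking a school she values at $0$ (so the ``private'' and ``isolated'' schools really are untouched from outside) and that no mixed equilibrium of a gadget escapes the payoff bounds used above.
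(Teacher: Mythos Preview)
Your Part~(1) is essentially the paper's argument: a local gadget where the central student's rivals have privately demanded fallbacks, so they rank truthfully regardless of type; the sincere center then gets $u^1/3$ while a sophisticated center could secure $u^2/2>u^1/3$. This is correct and matches the paper's Lemma~3.5.

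For Part~(2), however, you correctly identify the equilibrium-selection obstacle but do not resolve it, and your proposed remedies (combining gadget families, aggregate seat-counting) are heading in a harder direction than needed. The paper's trick is entirely local and very simple: instead of designating \emph{one} sophisticated student $i$ per gadget, it designates \emph{two} symmetric sophisticated students $i_1,i_2$ who both rank the same school $s_1$ first (together with a third student $i_3$ whose fallback is private), and whose second choices $s_2,s_3$ are each contested by exactly one further student with a private fallback (this is Example~3.3). The three equilibria of this sub-game are: $i_1$ truthful and $i_2$ shades (payoffs $u^1/2$ and $u^2/2$); the symmetric one; and a mixed equilibrium where both get $u^2/2$. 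Thus \emph{in every equilibrium}, at least one of $i_1,i_2$ gets only $u^2/2$; and if that student becomes sincere, the remaining sophisticated rival now faces a strict choice between a three-way $s_1$ (worth $u^1/3$) and a two-way fallback (worth $u^2/2$), so her best response is unique and pure, leaving the sincere student with $u^1/2>u^2/2$. No global argument across gadgets is needed.

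Your own gadget has a second, subtler problem beyond equilibrium selection: once $i$ is sincere, the two rivals $j_1,j_2$ still play a coordination game with each other over $s_1$, and that game has a symmetric \emph{mixed} equilibrium. For $u^2$ close to $u^1$ the shading probability in that mixed equilibrium is high enough that sincere~$i$'s payoff drops below $u^2/2$, so the claimed inequality ``sincere $i$ gets $u^1/2$'' fails in general. The paper's construction avoids this precisely because making one of the two candidates sincere collapses the remaining sophisticated player's decision to a strict one-player choice with a unique pure best response.
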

	
\begin{remark} \label[remark]{rmk:p01}
For $p = 0$ only the second part of the \lcnamecref{some-sophisticated-then-advantage-and-disadvantage} holds. For $p = 1$ only the first part of the \lcnamecref{some-sophisticated-then-advantage-and-disadvantage} holds.
\end{remark}	
\begin{remark} \label[remark]{rmk:fixed_utilities}
While we use a fixed cardinal utility form, the exact values are qualitatively inconsequential. What is important is that $u_1$ and $u_2$ are not too far apart, as otherwise all students would always rank their top school first.
\end{remark}
\begin{remark} \label[remark]{rmk:kgeq3}
It is straightforward to generalize this result and all other theorems and lemmas (except for the first part of \cref{some-sophisticated-then-bm-da}, as mentioned above) to arbitrary values of $k$ and even to preferences of unbounded length. One easy way to achieve this is to replace the current restrictions on the cardinal form with restrictions that mandate a significant decrease in students' utility when they receive their third alternative or worse.
\end{remark}
	
\bigskip

Before presenting the proof, we provide some intuition using two \lcnamecrefs{sincerity_good}. We first illustrate the basic idea using an \lcnamecref{sincerity_good} in which we restrict our focus to symmetric equilibria.
	
\begin{example}[Sincerity Can be an Advantage: Symmetric Equilibrium] \label[example]{sincerity_good}
Let $\S = \{s_1,s_2\}$ and let $\I = \{i_1,i_2,i_3\}$. Each $s \in \S$ has capacity $q_s = 1$. Every $i \in \I$ has $u_i(s_1) = 3$ and $u_i(s_2) = 2$. Suppose that $i_2$ and $i_3$ are sophisticated.
		
If $i_1$ is also sophisticated, the only \emph{symmetric} equilibrium is for each student to report $s_1\succ s_2$ with probability $0.8$ and to report $s_2\succ s_1$ with probability $0.2$. This leaves $i_1$ (as well as $i_2$ and $i_3$) with an expected utility of $\nicefrac{5}{3}$.
		
		If $i_1$ is sincere (and therefore reports $s_1\succ s_2$ with probability $1$), then the only equilibrium in which $i_2$ and $i_3$ play symmetrically is for each of them to report $s_1\succ s_2$ with probability~$0.6$ and to report $s_2\succ s_1$ with probability $0.4$ (each obtaining an expected utility of $\nicefrac{8}{5}<\nicefrac{5}{3}$). This leaves $s_1$ with an expected utility of $\nicefrac{9}{5}>\nicefrac{5}{3}$.
	\end{example}
	
The mechanics behind \cref{sincerity_good} are as follows: in a symmetric equilibrium among all students, since each student plays a mixed strategy, her expected utilities from each of the two pure strategies $s_1\succ s_2$ and $s_2 \succ s_1$ that she plays with positive probability are the same. By becoming sincere, $i_1$ in essence becomes a Stackelberg leader who commits to the strategy $s_1\succ s_2$ (instead of mixing it with $s_2 \succ s_1$), and by doing so, she ``crowds out'' the other (sophisticated) students in the school $s_1$, breaking the equality between the expected utilities of the other students from playing $s_1\succ s_2$ and $s_2 \succ s_1$, thereby shifting their mixed strategy in the new symmetric equilibrium toward playing $s_2 \succ s_1$ with higher probability and playing $s_1\succ s_2$ with lower probability. Since other students play $s_1\succ s_2$ with lower probability, the utility of $i_1$ from playing $s_1\succ s_2$ given the new strategies for all other students is higher than her utility from playing $s_1\succ s_2$ given the old strategies for all other students, which was her utility in the old equilibrium (since she played $s_1\succ s_2$ with positive probability in that equilibrium). Using this fundamental idea, we can generalize \cref{sincerity_good} to include more students and/or other utilities.
	
	While the market in \cref{sincerity_good} may seem to be very carefully crafted, e.g., in terms of alignment, \cref{all-sophisticated-then-prefer-sincere} below shows that the phenomenon identified in that \lcnamecref{sincerity_good} is in fact generic, and occurs often in a large uniform market. Before proceeding to that \lcnamecref{all-sophisticated-then-prefer-sincere}, we present a somewhat more complicated \lcnamecref{five-vs-six} where we do not restrict ourselves merely to symmetric equilibria.

	\begin{example}[Sincerity Can be an Advantage]\label[example]{five-vs-six}
		Let $\S = \{s_1,s_2,\ldots,s_6\}$ and $\I = \{i_1,i_2,\ldots,i_5\}$. Each $s \in \S$ has capacity $q_s = 1$. 
		Every $i \in \I$ has utility $4$ from being matched to her first choice, utility $3$ from being matched to her second choice, and utility $0$ otherwise. The preferences of the students are as follows:
		\begin{enumerate}
			\item
			$\succ_{i_1}: s_1,s_2$ (i.e., $i_1$ prefers $s_1$ first and $s_2$ second),
			\item
			$\succ_{i_2}: s_1,s_3$,
			\item
			$\succ_{i_3}: s_1,s_4$,
			\item
			$\succ_{i_4}: s_2,s_5$,
			\item
			$\succ_{i_5}: s_3,s_6$.
		\end{enumerate}
		Suppose that $i_1$ and $i_2$ are sophisticated. Regardless of whether $i_3$, $i_4$, and $i_5$ are sophisticated or not, we note that they will rank truthfully (since the second choice of each is guaranteed). Thus, there are three possible equilibria:\footnote{This example also demonstrates that a sincere student may be matched to different schools across different equilibria (here this can be observed for $i_4$ and for $i_5$). Furthermore, this holds even when restricting to pure-strategy Pareto-dominant equilibria, and even when looking only at utilities and not at the actual school to which the student is assigned. This phenomenon is in contrast to the case of strict priorities, where \citet[Proposition~2]{ps2008} show that each sincere student is matched to the same school across all equilibria.}
		\begin{equilibria}
			\item\label{eq-1-truthful}
			$i_1$ ranks truthfully, $i_2$ ranks $s_3$ at the top. In this case, the utility of $i_1$ is $\nicefrac{4}{2}=2$ and the utility of $i_2$ is $\nicefrac{3}{2}$.
			\item\label{eq-2-truthful}
			$i_2$ ranks truthfully, $i_1$ ranks $s_2$ at the top. In this case, the utility of $i_2$ is $\nicefrac{4}{2}=2$ and the utility of $i_1$ is $\nicefrac{3}{2}$.
			\item\label{eq-mix}
			$i_1$ and $i_2$ rank truthfully with probability $\nicefrac{3}{4}$ and rank their second choice at the top with probability $\nicefrac{1}{4}$.
			In this case, the utility of each of these students is $\nicefrac{3}{2}$.
		\end{equilibria}
		We note that \cref{eq-1-truthful} is strictly preferred by $i_1$ to the other two \lcnamecrefs{eq-2-truthful}, and is obtained if $i_1$ is sincere. Similarly, \cref{eq-2-truthful} is strictly preferred by $i_2$ to the other two \lcnamecrefs{eq-1-truthful}, and is obtained if $i_2$ is sincere. Thus, regardless of the \lcnamecref{eq-1-truthful}, at least one of the two students, $i_1$ or $i_2$, strictly prefers to become sincere.
	\end{example}
	
Our next result shows that in a large random market and under a broad range of cardinal utilities, \cref{five-vs-six} repeats linearly many times.
	
	\begin{lemma}[When Sophistication is Prevalent, Sincerity is Often an Advantage]\label[lemma]{all-sophisticated-then-prefer-sincere}
		Let $k=2$ and let $u_1 > u_2 > \frac{2}{3}\cdot u_1 > 0$. There exists a constant $\tau>0$ such that for any large enough $n$, when all students are sophisticated, in the uniform $(n; u^1,u^2)$ model the following holds: For any equilibrium (in which all of the students strategize), there exists a set of students of expected\footnote{The expectation is taken over preferences (for any given mapping from realized preferences to equilibria).} size at least $\tau n$, such that each student in this set strictly prefers any equilibrium had she been sincere\footnote{In such equilibria all students except herself strategize, while she is truthtelling.} to the given equilibrium. Furthermore, each student in this set maintains this strict preference regardless of whether or not any other students in this set become sincere.
	\end{lemma}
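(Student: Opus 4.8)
The plan is to reduce the claim to linearly many \emph{disjoint} copies of the configuration in \cref{five-vs-six}. Call a \emph{gadget} a choice of five students $\{a,b,c,d,e\}$ and six distinct unit-capacity schools $\{t_1,\dots,t_6\}$ such that $a$ ranks $t_1,t_2$; $b$ ranks $t_1,t_3$; $c$ ranks $t_1,t_4$; $d$ ranks $t_2,t_5$; $e$ ranks $t_3,t_6$ (with the given cardinal utilities $u_1,u_2$), and, crucially, no student outside $\{a,b,c,d,e\}$ ranks any of $t_1,\dots,t_6$. The first step is a purely structural observation: since in the uniform model each student ranks only the (two) schools she finds acceptable, a gadget is a \emph{closed submarket} --- its students apply only to its schools and its schools receive applications only from its students --- and, moreover, any two distinct gadgets are automatically vertex- and school-disjoint. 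The latter holds because the closedness requirement forces the set of all students ranking any one of a gadget's six schools to be exactly that gadget's five students; since the six schools are connected through the five students' preferences, a single shared school would propagate (shared school $\Rightarrow$ shared rankers $\Rightarrow$ more shared schools $\Rightarrow\cdots$) to a complete identification of the two gadgets.

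The second step transplants the analysis of \cref{five-vs-six} into a single gadget. Because a gadget is a closed submarket, in every equilibrium of the whole game the restriction to $\{a,b,c,d,e\}$ is an equilibrium of the gadget subgame, and these five students' payoffs depend only on one another's reports. Under the hypothesis $u_1 > u_2 > \tfrac23 u_1 > 0$, I will first show that in every equilibrium of the gadget subgame each of $c,d,e$ reports truthfully: each has a private second choice ($t_4$, $t_5$, $t_6$ respectively) that is certainly still available in the second round, while her contested first choice is obtained with positive probability, so (as $u_1>u_2$) ranking it first strictly beats ranking the guaranteed second choice first. The residual game between $a$ and $b$ --- each deciding whether to rank $t_1$ first or to rank her own second choice first --- is then an anti-coordination game whose equilibria are exactly the three listed in \cref{five-vs-six}: here $u_2 > \tfrac23 u_1$ is precisely what makes yielding optimal against an aggressive opponent and $u_1 > u_2$ what makes aggression optimal against a yielding one. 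In each of the three equilibria at least one of $a,b$ receives $\tfrac12 u_2$, whereas she would receive $\tfrac12 u_1$ in the unique equilibrium reached once she alone reports truthfully; call her the gadget's \emph{distinguished} student (which of $a,b$ this is depends only on which of the three equilibria is played).

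The third step is a first-moment count showing the expected number of gadgets is $\Theta(n)$. There are $\Theta(n^{11})$ ways to pick five labeled students and six labeled schools; a fixed such choice realizes the prescribed preferences with probability $\Theta(n^{-10})$ (each of the five students ranks a prescribed ordered pair, an event of probability $\tfrac{1}{n(n-1)}$, and preferences are drawn independently across students); and, conditionally, all of the remaining $n-5$ students avoid the six schools with probability $\bigl(1-\tfrac{12}{n}+O(n^{-2})\bigr)^{\,n-5}\to e^{-12}>0$. Hence the expected number of labeled gadgets --- and therefore also of unlabeled gadgets, since the labeling is at most $2$-to-$1$ --- is $\Theta(n)$, say at least $\tau n$ for a suitable $\tau>0$ and all large $n$.

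The conclusion is then immediate: for any realized preferences and any equilibrium, let the claimed set consist of the distinguished student of each gadget. Since distinct gadgets are disjoint, these students are distinct, so the set has size equal to the number of gadgets, with expected size at least $\tau n$. By the gadget analysis each distinguished student strictly prefers --- by $\tfrac12(u_1-u_2)>0$ --- the (unique, hence well-defined) outcome in which she is sincere to her payoff in the given equilibrium; and because distinct gadgets are disjoint closed submarkets, this comparison is entirely unaffected by the strategizing or sincerity of any student outside her own gadget, in particular the other distinguished students. The step I expect to be most delicate is the structural claim that distinct closed gadgets are automatically disjoint, together with the (finite but fiddly, and already foreshadowed by \cref{five-vs-six}) verification that the residual $a$--$b$ game has exactly those three equilibria and that in each the appropriate one of $a,b$ strictly gains from becoming sincere; the first-moment count is routine.
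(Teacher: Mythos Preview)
Your proposal is correct and follows essentially the paper's own approach: identify the five-student, six-school configuration of \cref{five-vs-six} as a closed ``gadget,'' observe that $c,d,e$ must rank truthfully while the residual $a$--$b$ game has exactly the three equilibria listed there (so at least one of $a,b$ strictly gains from sincerity), and show via a first-moment computation that linearly many such gadgets occur in expectation. The paper's (implicit) proof differs only cosmetically, computing the per-student probability of sitting in such a configuration rather than counting labeled gadgets; your explicit handling of gadget disjointness and the $2$-to-$1$ labeling symmetry is a nice touch that the paper leaves tacit.
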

	
	\cref{all-sophisticated-then-prefer-sincere} analyzes a random market where all students are sophisticated. By contrast, in a large uniform market where all students are sincere, it is obviously weakly beneficial to become sophisticated. As \cref{all-sincere-then-prefer-sophisticated} shows, this is also strictly beneficial for linearly many students.
	
	\begin{lemma}[When Sincerity is Prevalent, it is Often a Disadvantage]
\label[lemma]{all-sincere-then-prefer-sophisticated}\leavevmode
		\begin{enumerate}
			\item\label[part]{all-sincere-then-prefer-sophisticated-weak}
			If all students are sincere, then each student weakly prefers to become sophisticated.
			\item\label[part]{all-sincere-then-prefer-sophisticated-strict}
			Let $k=2$ and let $u_1 > u_2 > \frac{2}{3}\cdot u_1 > 0$. There exists a constant $\tau>0$ such that for any large enough $n$, when all students are sincere, in the uniform $(n; u^1,u^2)$ model there exists a set of students of expected\footnote{The expectation is taken over preferences.} size at least $\tau n$, such that each student in this set strictly prefers any equilibrium had she been sophisticated\footnote{In such equilibria she strategizes, while all other students are truthtelling.} to the outcome (where all students are sincere). Furthermore, each student in this set maintains this strict preference (relative to equilibrium outcomes in which she is sincere) regardless of whether or not any other students become sophisticated.
		\end{enumerate}
	\end{lemma}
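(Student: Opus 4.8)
\textit{Proof plan.} Part~\cref{all-sincere-then-prefer-sophisticated-weak} is immediate: if every other student is sincere, their reports do not depend on the report of the student under consideration, so by reporting truthfully a sophisticated student exactly reproduces her sincere outcome; since she is then the only one strategizing, her (unique) equilibrium payoff equals her best response, which is weakly at least this. For Part~\cref{all-sincere-then-prefer-sophisticated-strict} the plan is to exhibit a bounded ``gadget'' that occurs around a constant fraction of the students and there forces the desired strict preference, robustly to all other players' behavior, and then to count.

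Say that a student $i_1$ \emph{anchors a gadget} if her acceptable schools are $\{s_1,s_2\}$ with $s_1$ ranked first; among all students, $s_1$ is acceptable only to $i_1,i_2,i_3$, with $i_2$ and $i_3$ ranking it \emph{first}; $s_2$ is acceptable only to $i_1$ (who ranks it second) and to one further student $i_4$, who ranks $s_2$ \emph{first}; and the second choices of $i_2,i_3,i_4$ are three further distinct schools, each acceptable to nobody else. The first step is to check that, in \emph{any} Nash equilibrium and regardless of which of $i_2,i_3,i_4$ (or of the outsiders) are sophisticated, each of $i_2,i_3$ applies to $s_1$ in round~$1$ and $i_4$ applies to $s_2$ in round~$1$: for $i_2$ or $i_3$, ranking $s_1$ first yields at least a $\tfrac13$ chance at $u_1$ while still permitting a free grab of their (worth-$u_2$) second school, which strictly beats the $u_2$ obtainable by declining to compete for $s_1$ (here only $u_1>u_2$ is used); for $i_4$, whose \emph{favorite} school is $s_2$, competing for $s_2$ gives at least a $\tfrac12$ chance at $u_1$ plus a guaranteed fall-back to their fresh second school, again beating $u_2$. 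Hence $s_1$ and $s_2$ are both filled in round~$1$, and outsiders are irrelevant (they never find $s_1$ or $s_2$ acceptable, and even a padded list cannot apply to these schools once they are exhausted after round~$1$); so $i_1$'s outcome is determined by the submarket on $\{i_1,i_2,i_3,i_4\}$ and $\{s_1,s_2\}$ and depends only on whether $i_1$ lists $s_1$ or $s_2$ first.

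Now evaluate $i_1$'s two relevant options. If sincere she lists $s_1$ first, wins it in the three-way tie with $i_2,i_3$ with probability $\tfrac13$, and on losing finds $s_2$ already taken by $i_4$---payoff exactly $\tfrac13 u_1$. Listing $s_2$ first makes $i_1$ tie with $i_4$ for $s_2$, so she gets $s_2$ with probability $\tfrac12$ and nothing otherwise (her second-listed $s_1$ being full in round~$2$), for payoff $\tfrac12 u_2$; and no other report helps, since $s_1,s_2$ are $i_1$'s only positive-utility schools and both are gone after round~$1$ (padding with a zero-utility school only risks being matched to it). Thus for every equilibrium and every sophistication profile of the others, $i_1$'s payoff when sophisticated is $\max(\tfrac13 u_1,\tfrac12 u_2)=\tfrac12 u_2$, which by $u_2>\tfrac23 u_1$ strictly exceeds her sincere payoff $\tfrac13 u_1$; taking everyone sincere yields the non-``furthermore'' statement. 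The only probabilities used, $\tfrac13$ and $\tfrac12$, are tie-breaks over a single set of students and hence unchanged under STB, so the argument is insensitive to the tie-breaking rule.

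It remains to count. Fix a student $i$ and condition on her two acceptable schools. The events ``exactly two other students find $s_1$ acceptable, both ranking it first, and nobody ranks $s_1$ second'', ``exactly one other student finds $s_2$ acceptable, ranking it first, and nobody but $i$ ranks $s_2$ second'', and ``the three relevant second choices are distinct and acceptable to nobody else'' each have probability bounded below by a positive constant as $n\to\infty$ (Poisson approximation of the relevant $\mathrm{Bin}(n-1,\Theta(1/n))$ counts), and they concern essentially disjoint parts of the preference draw; so $i$ anchors a gadget with probability at least some absolute $\tau>0$ for all large $n$, whence the expected number of anchoring students is at least $\tau n$, and each such student has the asserted property. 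I expect the main difficulty to be exactly the robustness clause---engineering the gadget so that every \emph{other} gadget player is pinned to a round-$1$ action independently of their sophistication; the key device is making $s_2$ the \emph{first} choice of $i_4$ (rather than a second choice of a congested player), which permanently denies a sincere $i_1$ any recovery at $s_2$ yet still leaves a sophisticated $i_1$ even odds there, and pushing this through with the smallest possible gap between $u_1$ and $u_2$ is precisely what makes the threshold $u_2>\tfrac23 u_1$ tight.
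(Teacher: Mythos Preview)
Your proposal is correct and follows essentially the same approach as the paper. Your gadget---a focal student $i_1$ with first choice $s_1$ and second choice $s_2$, two competitors $i_2,i_3$ for $s_1$, and one competitor $i_4$ for $s_2$, each of the three competitors having a ``fresh'' guaranteed second choice---is precisely the paper's configuration (there labelled $z,y,x,w$ over schools $a,b$), and the decisive inequality $\tfrac12 u_2>\tfrac13 u_1$ is the same; the paper likewise observes that $y,x,w$ rank truthfully regardless of sophistication because their second choices are guaranteed, which is exactly your robustness step. Your treatment is somewhat more explicit than the paper's on two points: you spell out why each of $i_2,i_3,i_4$ is pinned to a round-$1$ action in \emph{every} equilibrium (not merely when sincere), and you note the insensitivity to the tie-breaking rule.
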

	
We are now ready to complete the proof of \cref{some-sophisticated-then-advantage-and-disadvantage}.

		\begin{proof}[Proof of \cref{some-sophisticated-then-advantage-and-disadvantage}]
		By a calculation similar to that in the proof of \cref{all-sincere-then-prefer-sophisticated}, the set of sincere students $z$ satisfying the conditions in that proof is of expected size at least $p\cdot\tau\cdot n$, where $\tau$ is as defined there, and so the first statement holds for $\tau_1=p\cdot\tau$ (recall that under the conditions in that proof, $y,x,w$ will rank truthfully even if they are sophisticated, since their second choice is guaranteed, and so we multiply $\tau$ by the $p$ probability of $z$ being sincere).
		By a calculation similar to that in the proof of \cref{all-sophisticated-then-prefer-sincere}/\cref{five-vs-six}, the second statement holds for $\tau_2=(1-p)^2\cdot\tau$, where $\tau$ is as defined in \cref{all-sophisticated-then-prefer-sincere} (once again, $x,w,v$ will rank truthfully regardless of whether or not they are sophisticated, since their second choice is guaranteed, and so we multiply by the $(1\!-\!p)^2$ probability of both $z$ and $y$ being sophisticated).
		The \lcnamecref{some-sophisticated-then-advantage-and-disadvantage} then holds with $\tau = \min\{\tau_1,\tau_2\}$.
	\end{proof}

\begin{remark} \label[remark]{rmk:boston_variants}
The fact that the proof uses only ROLs of length 2 immediately implies that the result also holds when BM is replaced by any First-Choice Maximal mechanism \citep{dms2018}, and in particular with the Corrected Boston Mechanism \citep{miralles2009}, the Modified Boston Mechanism \citep{dur2019}, and the Adaptive Boston Mechanism \citep{ms2017}. The result also holds under the Secure Boston Mechanism \citep{dhm2019} since the probability that a student is ranked first by the tie-breaking rule is only $\nicefrac{1}{n}$, and the proof remains almost exactly the same.
\end{remark}

\begin{remark} \label[remark]{rmk:populations_model}
It is possible to extend \cref{some-sophisticated-then-advantage-and-disadvantage} to a model with finitely many populations and with incomplete information on the sophistication levels of individual students (as long as the probability of being sophisticated is not too low). In this extension, each population contains a constant share of the students, and belonging to a population implies some population-specific probability of being either sincere or sophisticated. For details see \cref{apndx:incomplete}.
\end{remark}

	\section{Absolute Gains: Boston Mechanism vs.\ Deferred Acceptance}\label{sec:bm-da}
	
	In their paper, \citet{ps2008} compare, for sophisticated students, the outcomes of BM with those of DA (the student-optimal stable mechanism). They show that sophisticated students weakly prefer the Pareto-optimal equilibrium\footnote{When priorities are strict, a unique Pareto-optimal equilibrium indeed exists.} of BM to the student-optimal stable matching. Furthermore, they show that the set of all equilibria of BM in any given economy coincides with the set of all stable matchings in an ``augmented economy'' where all sincere students are demoted in the priorities of the schools that they do not rank first. These two results seem to suggest that BM is ``less fair'' than DA, which treats all students, whether sophisticated or sincere, equally. As we will now show, these phenomena do not necessarily continue to hold in the presence of weak priorities.
	
	\begin{proposition}[Sophisticated Students May Prefer DA to BM]\label[proposition]{boston-da}
		There exists a matching market with weak priorities
		that satisfies both of the following conditions:
		\begin{enumerate}
			\item
			The set of Nash equilibrium outcomes of BM in this matching market is different from the set of stable matchings under the ``augmented economy'' defined by \citet{ps2008} for this market. Furthermore, the union of the supports of the outcomes in the former set is different than in the latter set.
			\item
			BM has only one equilibrium (which is thus a Pareto-dominant equilibrium) in this market.
			The utility of each sophisticated student in the equilibrium of BM is strictly lower than her utility in the DA outcome.
			The utility of each sincere student in the equilibrium of BM is strictly higher than her utility in the DA outcome.
		\end{enumerate}
	\end{proposition}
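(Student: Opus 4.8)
The plan is to prove this existence statement by exhibiting one explicit market and checking both conditions by direct computation. I would take five schools $s_1,\dots,s_5$, each of capacity one, and four students: two sophisticated students $a,b$ and two sincere students $c,d$, with true preferences $s_1\succ s_2$ for $a$, $s_1\succ s_3$ for $b$, $s_2\succ s_4$ for $c$, and $s_3\succ s_5$ for $d$, and cardinal utilities $u^1$ for a student's first-ranked school, $u^2$ for her second, and $0$ otherwise, where $u^1>u^2>0$ (any such values work). Thus $s_1$ is the only overdemanded school, $s_4$ and $s_5$ are ``private'' safety schools wanted only by $c$ and by $d$ respectively, and the instance is symmetric under the relabeling $a\leftrightarrow b$, $s_2\leftrightarrow s_3$, $c\leftrightarrow d$, $s_4\leftrightarrow s_5$ (with $s_1$ fixed), which halves the bookkeeping.

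The first step is to show that BM has a unique equilibrium outcome, which follows once one argues that listing $s_1$ first is a strictly dominant report for $a$ (and, symmetrically, for $b$), given that $c,d$ report truthfully. The key observations are that $a$ and $b$ are the only students who ever apply to $s_1$; that $c$ applies to $s_2$ in round~$1$, so $s_2$ is full thereafter; and that $a$ values only $s_1$ and $s_2$. It follows that listing $s_1$ first gives $a$ either $s_1$ --- which happens with probability $1$ if $b$ does not also list $s_1$ first and with probability $1/2$ if she does --- or nothing, for a payoff of $u^1$ or $u^1/2$ respectively, whereas listing any other school first can match $a$ to $s_1$ only with strictly smaller probability and to $s_2$ only with probability at most $1/2$; the hypothesis $u^1>u^2$ is exactly what makes this trade-off fall strictly short of simply listing $s_1$ first. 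Since a strictly dominant strategy is played with probability one in every Nash equilibrium, $a$ and $b$ both list $s_1$ first in every equilibrium; this pins down the unique equilibrium outcome --- one of $a,b$ gets $s_1$ (each with probability $1/2$) and the other is unmatched, while $c\to s_2$ and $d\to s_3$ deterministically.

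Next I would trace the DA run under truthtelling: $a,b$ apply to $s_1$, $c$ to $s_2$, $d$ to $s_3$; the student rejected at $s_1$ moves to her second choice, where she may displace one sincere student, who then drops to her own uncontested safety school, so every rejection chain terminates after at most one displacement and the outcome is determined by at most two tie-break flips. This yields DA utilities $u^1/2+u^2/4$ for each of $a,b$ and $(3u^1+u^2)/4$ for each of $c,d$, to be compared with the BM payoffs $u^1/2$ for each of $a,b$ and $u^1$ for each of $c,d$: each sophisticated student gets strictly less under BM, while each sincere student gets strictly more under BM (the latter again using $u^1>u^2$), and since the BM equilibrium is unique it is vacuously Pareto-dominant; this establishes condition~(2). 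For condition~(1), I would note that in the \citet{ps2008} augmented economy the only demotions are those of $c$ at $s_4$ and of $d$ at $s_5$, schools that no other student ever lists, so the demotions are vacuous and the augmented economy has precisely the stable matchings of the true economy; in particular its stable-matching support contains matchings in which $d$ is assigned to $s_5$ (e.g.\ on tie-breaks where $a$ beats $b$ at $s_1$ and $b$ then beats $d$ at $s_3$ --- such a matching is stable because $d$, though she prefers $s_3$ to $s_5$, is not demoted at her top school $s_3$ and lost the tie-break there to $b$). Under BM, however, $d$ is matched to $s_3$ in every equilibrium with certainty, so that matching lies in no BM equilibrium's support; hence the set of BM equilibrium outcomes, and even the union of their supports, differs from that of the augmented economy.

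I expect the only genuinely delicate step to be the uniqueness claim: strict dominance must be checked against the entire space of rank-order lists (not merely against the choice of top-listed school) and against every, possibly mixed, report of the other sophisticated student, and one must confirm that the ``gamble at $s_1$, then fall back on $s_2$'' line never so much as ties ``$s_1$ first'' --- the point at which both $u^1>u^2$ and the fact that $c$ and $d$ already plug $s_2$ and $s_3$ in round~$1$ are indispensable. The DA enumeration and the stability verification in the remaining steps are routine once one confirms that rejection chains die after a single displacement.
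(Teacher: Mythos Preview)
Your construction is correct and proves the proposition, but via a genuinely different example than the paper's. The paper uses only two schools $s_1,s_2$ and four students: two sincere students who want only $s_1$, and two sophisticated students who prefer $s_1$ to $s_2$ (utilities $4$ and $3$). There the sincere students crowd $s_1$ so heavily that in the unique BM equilibrium the sophisticated students rationally rank $s_2$ \emph{first}, obtaining utility $3/2$, whereas under DA they try $s_1$ first and fall back to $s_2$, earning $5/2$. In your example the mechanism is reversed: the sophisticated students' top school $s_1$ is uncontested by anyone sincere, so ranking $s_1$ first is strictly dominant for them, but the sincere students plug $s_2$ and $s_3$ in round~1, stripping the loser at $s_1$ of any fallback; under DA that loser can still displace a sincere student. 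Both examples exhibit the ``reduced options'' effect of BM identified in \cref{effects-boston}; the paper's is more minimal, while yours is essentially (a symmetrized version of) the gadget from the proof of \cref{some-sophisticated-then-bm-da} with sophistication labels chosen to fit the proposition.

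One wording issue: your claim that ``the only demotions are those of $c$ at $s_4$ and of $d$ at $s_5$'' is literally false---in the \citet{ps2008} augmented economy a sincere student is demoted at \emph{every} school other than her top choice, so $c$ is also demoted at $s_1,s_3,s_5$ and $d$ at $s_1,s_2,s_4$. What you presumably mean, and what suffices, is that the only demotions at schools the sincere students themselves \emph{rank} are at $s_4$ and $s_5$; since no one else ranks those schools, these demotions cannot create or destroy any blocking pair, and the remaining demotions are at schools $c$ and $d$ find unacceptable and hence never block with. With that correction your conclusion stands: the augmented economy has the same stable matchings as the original, and in particular contains a matching with $d\to s_5$, which lies in no BM equilibrium's support.
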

	
	\begin{proof}
		Let $S=\{s_1,s_2\}$ and let $I=\{i_1,\ldots,i_{4}\}$. Every $s\in S$ has $q_s=1$. Every $i\in I$ prefers $s_1$ first, and receives utility $4$ from being matched to $s_1$. Students $i_1$ and $i_2$ are sincere and do not wish to be matched to $s_2$. Students $i_3$ and $i_4$ are sophisticated and each receive utility $3$ from being matched to $s_2$.
		Note that the augmented economy in this case is the same as the original economy.
		
		In the unique BM equilibrium, each of the sophisticated students goes to $s_2$ first, and so gets utility $\nicefrac{3}{2}$; each sincere student therefore gets utility $\nicefrac{4}{2}=2$ in this equilibrium. The union of the supports of all BM equilibrium outcomes is therefore the set of all matchings where some sincere student $i\in\{i_1,i_2\}$ is matched to $s_1$ and some sophisticated student $i\in\{i_3,i_4\}$ is matched to $s_2$.
		
		In the student-optimal stable mechanism, each sophisticated student goes to $s_1$ first, and so receives utility $\nicefrac{1}{4}\cdot4+\nicefrac{3}{4}\cdot(\nicefrac{1}{3}\cdot3+\nicefrac{2}{3}\cdot\nicefrac{3}{2})=\nicefrac{5}{2}>\nicefrac{3}{2}$;
		each sincere student therefore gets utility $\nicefrac{4}{4}=1<2$ in DA. The union of the supports of all stable matchings (in the augmented economy) is the set of all matchings where some (not necessarily sincere) student $i\in I$ is matched to $s_1$ and some sophisticated student $j\in\{i_3,i_4\}$ with $j\ne i$ is matched to~$s_2$.
	\end{proof}

	As in the study of the trade-off between sincerity and sophistication, one may ask whether, and to what extent, the above-demonstrated preference of sophisticated students for DA over BM remains prominent in a large random market. The second main result of this paper shows that, generally, both a constant fraction of any population prefers DA to BM and a constant fraction of any population prefers BM to DA, in expectation.

	\begin{theorem}[Absolute Gains are Often Positive for Some and Negative for Others]
\label{some-sophisticated-then-bm-da}
		Let $k=2$. There exists a constant $\tau>0$ such that for any large enough $n$,
		for any utilities $u^1>u^2>0$, and for any fixed assignment of students into sophisticated and sincere types, in the uniform $(n; u^1,u^2)$ model both of the following hold:
		\begin{enumerate}
			\item\label[part]{some-sophisticated-then-bm-da-prefer-bm}
			There exists a set consisting of an expected fraction of at least $\tau$ of the sophisticated students and an expected fraction of at least $\tau$ of the sincere students,\footnote{The expectation is taken over preferences.} such that each student in this set strictly prefers the DA outcome to any equilibrium of~BM.
			\item\label[part]{some-sophisticated-then-bm-da-prefer-da}
			There exists a set consisting of an expected fraction of at least $\tau$ of the sophisticated students and an expected fraction of at least $\tau$ of the sincere students,\footnote{Once again, the expectation is taken over preferences.} such that each student in this set strictly prefers any equilibrium of~BM to the DA outcome.
		\end{enumerate}
		Furthermore, each student in each of the above sets maintains this strict preference even if the assignment of students into sophisticated and sincere types changes arbitrarily.
	\end{theorem}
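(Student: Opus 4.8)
The plan is to prove each part by producing a fixed, bounded‑size configuration of preferences — a ``gadget'' — with the following property: whenever the realized preferences of some student $i$ together with a constant number of other students form this configuration, those students and the schools they rank constitute a connected component of the preference structure (no one outside the group ranks any of their schools, and they rank no school outside the group), so that BM and DA both act on the component in isolation; within this component, $i$'s payoff is the same deterministic number under \emph{every} BM equilibrium and under \emph{every} assignment of the component's students into sincere/sophisticated types, while $i$'s DA payoff lies strictly on one side of that number. If, in addition, such a configuration arises around any given student with probability at least some constant, we are done: membership in the resulting set of students depends only on preferences, so the set contains an expected $\tau$‑fraction of the sophisticated students and an expected $\tau$‑fraction of the sincere students (types being fixed and independent of preferences), and the preference is by construction insensitive to re‑assignment of types. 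Throughout, the key lever is that an \emph{auxiliary} student whose second choice is a school desired by nobody else has a \emph{strict} best response — no matter her own type and no matter how others play — of ranking her (contested) first choice at the top: ranking the private fallback first merely guarantees her $u^2$, whereas ranking her true top first guarantees a positive chance at $u^1$ with the fallback still available as a safety net. This forces the BM play inside each gadget.

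For the first part (some students strictly prefer DA to every BM equilibrium), take a focal student $i$ with first choice $s$ and second choice $t$; one other student $a$ whose first choice is $s$ and whose second choice is private; and one student $b$ whose first choice is $t$ and whose second choice is private; and require that no student other than $i$ and $a$ ranks $s$ at all and no student other than $i$ and $b$ ranks $t$ at all. By the lever above, $a$ ranks $s$ first and $b$ ranks $t$ first in every equilibrium; and $i$ — forced if sincere, and strictly optimal if sophisticated, since ranking $s$ first yields $\tfrac12 u^1$ while ranking $t$ first yields only $\tfrac12 u^2$ — ranks $s$ first. Hence in BM the round‑$1$ pool at $s$ is exactly $\{i,a\}$ and $t$ is permanently filled in round~$1$ by $b$, so $i$ obtains $s$ with probability $\tfrac12$ and nothing otherwise: $u^{BM}_i=\tfrac12 u^1$. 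Under truthful DA, tracing the short rejection chains in this four‑school component shows that $i$ keeps $s$ with probability $\tfrac12$ and, conditional on losing $s$, still gets a fresh $\tfrac12$ shot at $t$ against $b$, so $u^{DA}_i=\tfrac12 u^1+\tfrac14 u^2>u^{BM}_i$ for every $u^1>u^2>0$.

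For the second part (some students strictly prefer every BM equilibrium to DA), take a focal student $i$ with first choice $s$ and a private second choice; a student $c$ with first choice $s'$ and second choice $s$; and a student $d$ with first choice $s'$ and a private second choice; and require that no student other than $i$ ranks $s$ first, no student other than $c$ ranks $s$ second, and no student other than $c$ and $d$ ranks $s'$ at all. Then $d$ ranks $s'$ first in every equilibrium, and $i$ ranks $s$ first in every equilibrium (ranking $s$ first guarantees at least $\tfrac12 u^1+\tfrac12 u^2$, beating the private fallback's $u^2$); and, given that $i$ ranks $s$ first, $c$ strictly prefers to rank $s'$ first — obtaining $\tfrac12 u^1$ — rather than $s$ first — obtaining only $\tfrac12 u^2$ — because whichever of $s,s'$ appears second on $c$'s list has been permanently filled by the end of round~$1$. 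Consequently $i$ is alone in $s$'s round‑$1$ pool, so $u^{BM}_i=u^1$ under every equilibrium and every type assignment. Under truthful DA, $c$ reaches $s$ only after losing a coin flip at $s'$ to $d$, and then displaces $i$ only on a further coin flip, so $u^{DA}_i=\tfrac34 u^1+\tfrac14 u^2<u^1=u^{BM}_i$.

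It then remains to bound from below the probability that each configuration arises around a given student. Each gadget is a conjunction of $O(1)$ events of three kinds: a prescribed school is the first choice of a prescribed number of students; a prescribed school is the second choice of no student; a prescribed student's second choice is a school wanted by no one else (which forces the handful of schools involved to be distinct). In the uniform $(n;u^1,u^2)$ model each of these has probability bounded away from $0$ and $1$ for all large $n$, being governed by a sum of independent indicators that converges to a Poisson law; and conditioning on the $O(1)$ students already fixed perturbs the law of the remaining students' preferences by only $O(1/n)$, so conditioning sequentially shows the whole conjunction has probability at least some $\tau'>0$ uniformly in large $n$. Taking $\tau$ to be the smaller of the two constants so obtained, and taking the set for the first (resp.\ second) part to be the set of students around whom the first (resp.\ second) gadget is realized, finishes the proof. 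The step I expect to be the crux is the robustness of the in‑gadget BM analysis — checking that the focal and auxiliary students really do have \emph{strict} best responses that agree across all type assignments and all (possibly mixed) equilibria, so that the component's play, and hence $u^{BM}_i$, is genuinely pinned down; by comparison the component‑decoupling observation and the Poisson‑type counting are routine.
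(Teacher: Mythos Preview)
Your proposal is correct and follows essentially the same approach as the paper: the paper uses the very same three-student, four-school gadget (your two gadgets are in fact one gadget viewed from two different focal students---your $(i,a,b)$ configuration relabels to your $(c,i,d)$ configuration with $b$ as the new focal student), pins down BM play via the private-fallback ``lever'' you identify, and obtains the constant $\tau$ via an explicit Poisson-limit calculation of the same flavor as yours. The only cosmetic difference is that the paper presents a single gadget and reads off both parts from it, whereas you present the two focal perspectives separately.
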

	
\begin{remark} \label[remark]{rmk:th2_all}
\cref{rmk:boston_variants,rmk:populations_model} also hold for \cref{some-sophisticated-then-bm-da}. Moreover, a stronger version of \cref{rmk:populations_model} actually holds, as there is no need to make any restrictions on the share of sophisticated students in the population.
\end{remark}

	\section{A Trade-Off between Positive and Negative Effects}\label{sec:what_matters}
	
	\subsection{The Implications of Sincerity:\texorpdfstring{\\}{ }Crowding Out Others vs.\ Not Responding to Excess Demand}\label{effects-sincerity}
	
	The results of \cref{sec:sincere} demonstrate a positive effect of the sincerity of a student $i$ in BM: the ability to crowd out others by essentially becoming a Stackelberg leader who commits to ranking in accordance with her true preference, forcing (other) sophisticated students to respond to this commitment by reducing demand for the school in which student $i$ is interested. We observe that this effect completely disappears when priorities are strict. Indeed, since the outcome of BM under strict priorities (given a fixed profile of students' preferences) is deterministic rather than randomized, if student $i$ is harmed by the competition of student $j$ for a certain school (in the case of strict preferences, this means that student $j$ has higher priority at this school), then student $i$ committing to apply to that school does not deter student $j$ from applying to that school, as student $j$ has priority in that school and so will not be crowded out by student~$i$.
	
	As demonstrated by \citet{ps2008} in the context of strict priorities, being sincere also has an adverse effect: not responding to excess demand for one's favorite school. As it turns out, this effect can still be manifested in markets with weak priorities. In fact,
	in some cases the sincerity of a student $i$ may not crowd out any students whatsoever, but may nonetheless harm student $i$ as she does not respond to excess demand for her favorite school. This is precisely what happens in the analysis of \cref{all-sincere-then-prefer-sophisticated} for student $z$. In that example, her sincerity does not crowd out the competition, but it does cause her to not respond to excess demand.
	
What happens when both effects of sincerity are present? Which effect dominates: the positive effect of crowding out others or the negative effect of not responding to excess demand?\footnote{A slightly different way to put things is to say that the players' decisions of whether to be sincere are strategic substitutes. If many of the other players are sincere, being sincere as well becomes less appealing, as it crowds out others less efficiently. We do not want to overemphasize this description because being sincere or sophisticated in our model is not a strategic choice, but rather the dimension along which we compare expected welfare.} In \cref{all-sincere-then-prefer-sophisticated}, the competition that student $z$ faces for school $a$ is completely decoupled from her competition for school $b$. As long as this feature is maintained, it is clear that by only tweaking the demand for the second choices of $z$'s competition for school $a$, we may change whether, and to what extent, a sincere $z$ is able to crowd out the competition, without changing the expected utility of a sophisticated $z$ from being assigned to any of the schools. This way, we may easily create variants of this market where the positive effect dominates the negative one or vice versa. A more interesting question, therefore, is what happens when the markets for school $a$ and for school $b$ are entangled? In other words, is the lack of symmetry among the different students in \cref{all-sincere-then-prefer-sophisticated} required for sincerity to be disadvantageous? To be more specific, what happens in the extreme case where all students are symmetric: which effect of sincerity dominates then? The following \lcnamecref{sincerity_bad} alludes to an answer to this question.
	
	\begin{example}[Negative Effects of Sincerity May Dominate Positive Effects]\label[example]{sincerity_bad}
		Let $\S = \{s_1,s_2,s_3\}$ and let $\I = \{i_1,i_2,\dots,i_n\}$. The schools have the following capacities:
		\begin{align*}
		q_{s_1} &= 1, \\
		q_{s_2} &= 1, \\
		q_{s_3} &= n-3.
		\end{align*}
		Every $i \in \I$ has $u_i(s_1) = 9$, $u_i(s_2) = 1$, and $u_i(s_3) = \frac{1}{2(n-3)}$. Suppose that $i_2,i_3,\dots,i_n$ are all sophisticated.
		
		If $i_1$ is also sophisticated, then in a symmetric equilibrium\footnote{A symmetric equilibrium exists since the game is finite and symmetric \citep{Nash1951}. Moreover, one can verify that when $n$ is large enough, the only symmetric equilibrium is for each student to report $s_1 \succ s_3 \succ s_2$ with some probability $t'$, and report $s_2 \succ s_3 \succ s_1$ with probability $1-t'$.} all students get the same expected utility, and since the sum of utilities is $9 + 1 + (n-3)\cdot \frac{1}{2(n-3)} = 10.5$, each student's expected utility is~$\frac{10.5}{n}$.
		
		If $i_1$ is sincere (and therefore reports $s_1\succ s_2\succ s_3$ with probability $1$), then for large enough~$n$, the only equilibrium in which $i_2,\ldots,i_n$ play symmetrically is for each of them to report $s_1 \succ s_3 \succ s_2$ with some probability~$t$, and report $s_2 \succ s_3 \succ s_1$ with probability\footnote{An equilibrium in which $i_2,\ldots,i_n$ play symmetrically exists since the game is finite and symmetric with respect to these students \citep{Nash1951}. To see why only these two strategies are played for large enough $n$, notice that an application to $s_1$ or $s_2$ in the second stage has an exponentially small probability of success, and therefore when $n$ is large enough, ranking $s_3$ second gives higher utility as it ensures admission to school $s_3$ (since $i_1$ will not apply to it before the third round).} $1-t$. In \cref{apndx:sincerity_bad}, we show that $t\ge\frac{90}{101}$ for large enough $n$, which implies that as $n$ grows large, the expected utility of $i_1$ approaches $\frac{9}{1+t(n-1)}<\frac{9}{tn}\le\frac{10.1}{n}$, and so for large enough $n$ the expected utility of $i_1$ must be strictly smaller than~$\frac{10.5}{n}$, and thus becoming sincere harms~$i_1$.
	\end{example}
	
	\cref{sincerity_bad} demonstrates that even when students share the same preferences and the same cardinal utility function, the crowding-out effect does not necessarily dominate the negative implications of being sincere and not responding to excess demand. Sincerity turns out to be a package deal, coupling together the inability to respond to excess demand with the power of crowding-out others, and in this case the negative effect outweighs the positive one. While student $i_1$, if she is sincere, slightly increases her chances of being admitted to school $s_1$, she completely forfeits her chances of being admitted to school $s_3$ by approaching school~$s_2$ in the second round; nevertheless, she does so even though in this round school $s_2$ is already very likely to be full.
	
	\subsection{The Choice of Mechanism:\texorpdfstring{\\}{ }Reduced Competition vs.\ Reduced Options}\label{effects-boston}
	
	The results of \cref{sec:bm-da} demonstrate that for a sophisticated student $i$, BM may have a negative effect as it may reduce the options of student $i$, effectively forcing her to ex-ante choose to apply to only one overdemanded school. As we noted in \cref{effects-sincerity}, this negative effect also completely disappears when priorities are strict. Indeed, in an equilibrium of BM, since the outcome is deterministic rather than randomized, student $i$ never has any reason to apply to a school she will not be (deterministically) accepted to, and therefore she is unharmed by being effectively forced to ex-ante choose to apply to only one overdemanded school: as she will never be rejected by this school, she is unharmed by giving up her ``plan B.''
	
	The positive effect of using BM is discussed by \citet{ps2008} in the context of strict priorities: reduced competition (in particular, for one's top choice), which can also be seen to manifest in markets with weak priorities. This is demonstrated by the following \lcnamecref{no-reduced-competition}, which also sheds light on the interplay between this positive effect of reduced competition and the above-discussed adverse effect of reduced options.
	
	\begin{example}[Positive Effects of BM Compared with DA May Dominate Negative Effects]\label[example]{no-reduced-competition}
		Let $S=\{s_1,s_2\}$ and for ease of presentation let the number of students $n$ be large and divisible by~$12$. Each $s \in \S$ has capacity $q_s = 1$. 
		Each student has utility $1$ for her most-preferred school, and utility $1\!-\!\varepsilon$, for very small $\varepsilon$, for her second-preferred school,
		with one quarter of the students preferring $s_1$ the most, and the remaining three quarters of the students preferring $s_2$ the most.
		Under DA, each student has a probability of roughly $\nicefrac{1}{n}$ of being admitted to each school, for a total expected utility of roughly $\nicefrac{2}{n}$.
		Under BM, if a large fraction (say, one half) of the students are sophisticated, then in equilibrium roughly half of the students will apply to each school, for an expected utility of roughly $\nicefrac{2}{n}$ for each student. If, however, only a small fraction (say, one twelfth) of the students are sophisticated, then only these students and the quarter of the students who truly prefer~$s_1$ will apply to $s_1$, resulting in lower competition (at most $\nicefrac{n}{3}$ students) for~$s_1$ than for~$s_2$ and an expected utility of strictly more than $\nicefrac{2}{n}$ for each sophisticated student (and for each sincere student who prefers $s_1$). Therefore, in this case sophisticated students strictly prefer BM to DA.
	\end{example}
	
	The question that naturally arises is which of the two effects generally dominates for a sophisticated student: reduced competition or reduced options. It seems that the answer depends on whether the ``overall competition'' was reduced.
	
	To give one example along the lines of \cref{no-reduced-competition}, consider a sophisticated student $i$ and call the top of her preference list that consists of overdemanded schools her ``overdemanded set.'' If all other students have the same overdemanded set of size $o$ (and their priorities are randomly selected), then under DA, student $i$ essentially faces competition from all the other students for the $o$ schools in her overdemanded set. Under BM, student $i$ faces competition from an average of $\nicefrac{1}{o}$ of the other students in one school and, therefore, if this competition is evenly spread (weighted by the utilities of student $i$ for the different schools), then DA and BM give similar utility to student $i$, whereas if this competition is not evenly spread then BM gives higher utility to student $i$.
	
	In contrast to the example of a shared overdemanded set, suppose that all other students have very attractive outside options,
	to the extent that each other student~$j$ only prefers one of the schools in student $i$'s overdemanded set to $j$'s outside option. In such a case, student~$i$ most certainly prefers DA, because BM not only does not reduce the competition student $i$ faces but, in fact, reduces her options. This is precisely what happens in the analysis of \cref{some-sophisticated-then-bm-da} for student~$z$. In that example, BM reduces the options of student $i$ without reducing the competition she faces.

\section{Conclusion} \label{sec:discussion}
Market designers often encounter markets that are governed by manipulable mechanisms, with BM being possibly the most prominent example. Common sense dictates that it is better to switch to a strategy-proof mechanism, as it allows the designer to directly optimize some target function (e.g., efficiency), subject to certain desirable constraints (e.g., stability), and to preserve incentive compatibility. When a manipulable mechanism is in place, it is difficult to predict what properties the likely outcome will satisfy, and whether the strategic situation will give an advantage to some populations over others.

In the specific case of BM, it intuitively seems that the use of this mechanism favors players who recognize strategic opportunities over players who do not. Indeed, as \cite{ps2008} show, in a strict-priority environment a student weakly prefers being sophisticated to being sincere, and BM weakly benefits sophisticated players (compared with DA). This constitutes a very strong argument against BM, as strategic sophistication and access to relevant information can be highly correlated with socioeconomic status and, in any case, these are definitely not the criteria based on which the matching should be determined. However, as we show in this paper, when considering a reasonable scenario in which the school district uses a coarse priority structure rather than a strict one, there are several new effects that may reverse the previous theoretical prediction. 

The question, therefore, is what should we recommend to practitioners. In \cref{sec:what_matters} we tried to convey our view that understanding the market structure is crucial for providing sound advice. That being said, it is also fair to assess that in most real-life markets, reacting to excess demand is quite straightforward in terms of strategic behavior, whereas crowding out others is more demanding as it requires a reputation for being sincere and for not shying away from competition. This suggests that, at the end of the day, while policy-makers who are focused on efficiency and absolute gains should probably adopt BM (which takes into account some students' cardinal preferences), in most markets policy-makers who are mostly interested in fairness considerations and relative gains should still favor DA. That being said, our results imply that this is not as automatic a recommendation as may have widely been believed, and should be taken with a grain of salt, as in some specific markets with special structures BM may not only benefit all players, but also provide the same or higher gain for students who are not sophisticated or not informed about the market. As such students many times come from populations from a weaker socioeconomic background, in such markets turning to BM may help keep or even increase diversity in highly coveted schools.

\clearpage

\appendix
\section{Omitted Proofs}\label{apndx:proofs}
	\begin{proof}[Proof of \cref{some-sophisticated-then-bm-da}]
		Let $u^1>u^2>0$. We first observe that if there exist three students $z,y,x$ and four schools $a,b,c,d$ such that the following conditions hold:
		\begin{enumerate}
			\item
			$\succ_{z}: a,b$,
			\item
			$\succ_{y}: a,c$,
			\item
			$\succ_{x}: b,d$,
			\item
			$a,b,c,d$ are not preferred first or second by any other student,
		\end{enumerate}
		then regardless of whether each of these three students (and, in fact, regardless of whether any student) is sophisticated or sincere, $z$ strictly prefers the DA outcome to any equilibrium of BM and $x$ strictly prefers any equilibrium of BM to the DA outcome.
		First note that regardless of whether $z$, $y$, and $x$ are sophisticated or sincere, they will rank truthfully under any equilibrium of BM even if they are sophisticated. For $y$ and $x$ this holds since their second choice is guaranteed, and for $z$ this holds as she prefers getting $a$ with probability $\nicefrac{1}{2}$ to getting $b$ with probability $\nicefrac{1}{2}$.
		Now, under DA $z$ gets utility $\nicefrac{1}{2}\cdot u^1+\nicefrac{1}{2}\cdot\frac{u^2}{2}$ (since with probability $0.5$, $z$ beats $y$ in the competition for $a$, and with the remaining probability $0.5$ it is the case that with probability $0.5$ $z$ beats $x$ in the competition for $b$)
		while under any equilibrium of BM $z$ gets utility $\frac{u^1}{2}$, which is strictly smaller. Under DA $x$ gets utility $\nicefrac{1}{2}\cdot u^1+\nicefrac{1}{2}\cdot(\nicefrac{1}{2}\cdot{u^1}+\nicefrac{1}{2}\cdot u^2)$ (since with probability $0.5$, $z$ beats $y$ in the competition for $a$ and so $x$ is the only applicant for $b$, and with the remaining probability $0.5$, $z$ competes with $x$ for $b$ and so $x$ wins and gets $b$ with probability $0.5$, and loses and gets $d$ with probability $0.5$)
		while under any equilibrium of BM $x$ gets utility $u^1$, which is strictly greater.
		
		For \cref{some-sophisticated-then-bm-da-prefer-bm}, let us calculate the probability that for any given $z\in\I$, there exist $y,x\in\I$ such that the above four conditions are satisfied. Let $a$ denote $z$'s most-preferred school and~$b$ denote her second-preferred school; this probability is given by
		\begin{multline*}
		\overbrace{\left(1-\left(\frac{n-1}{n}\right)^{n-1}\right)}^{\substack{\text{some $y\in I$ ranks $a$ first}\\\text{(let $c$ denote her}\\\text{second-preferred school)}}}\cdot
		\overbrace{\frac{n-2}{n-1}}^{\substack{\text{$y$ does not}\\\text{rank $b$ second}\\\text{(i.e., $c\ne b$)}}}\cdot
		\overbrace{\left(1-\left(\frac{n-1}{n}\right)^{n-2}\right)}^{\substack{\text{some $x\in I$ ranks $b$ first}\\\text{(let $d$ denote her}\\\text{second-preferred school)}}}\cdot
		\overbrace{\frac{n-3}{n-1}}^{\substack{\text{$x$ does not rank}\\\text{$a$/$c$ second}\\\text{(i.e., $d\notin\{a,c\}$)}}}\cdot\\*\cdot
		\underbrace{\left(\frac{n-4}{n}\cdot\frac{n-5}{n-1}\right)^{n-3}}_{\substack{\text{No other $i\in\I$ ranks}\\\text{$a$/$b$/$c$/$d$ first or second}}}
		\xrightarrow[n\to\infty]{}
		\left(1-\frac{1}{e}\right)^2\cdot\frac{1}{e^8}
		\end{multline*}
		Let $\tau$ be any constant slightly smaller than $(1-\nicefrac{1}{e})^2\cdot\nicefrac{1}{e^8}$; then we have that for large enough $n$, both the expected fraction of sophisticated students $z$ for which such $y,x\in\I$ exist and the expected fraction of sincere students $z$ for which such $y,x\in\I$ exist are at least~$\tau$, and so we have identified an appropriate set of students that satisfies \cref{some-sophisticated-then-bm-da-prefer-bm}.
		
		For \cref{some-sophisticated-then-bm-da-prefer-da}, let us calculate the probability that for any given $x\in\I$, there exist $z,y\in\I$ such that the above four conditions are satisfied. Indeed, let $b$ denote $x$'s most-preferred school and $d$ denote her second-preferred school; this probability is given by
		\begin{multline*}
		\overbrace{\left(1-\left(\frac{n-1}{n}\right)^{n-1}\right)}^{\substack{\text{some $z\in I$ ranks $b$ second}\\\text{(let $a$ denote her}\\\text{most-preferred school)}}}\cdot
		\overbrace{\frac{n-2}{n-1}}^{\substack{\text{$z$ does not}\\\text{rank $d$ first}\\\text{(i.e., $a\ne d$)}}}\cdot
		\overbrace{\left(1-\left(\frac{n-1}{n}\right)^{n-2}\right)}^{\substack{\text{some $y\in I$ ranks $a$ first}\\\text{(let $c$ denote her}\\\text{second-preferred school)}}}\cdot
		\overbrace{\frac{n-3}{n-1}}^{\substack{\text{$y$ does not rank}\\\text{$b$/$d$ second}\\\text{(i.e., $c\notin\{b,d\}$)}}}\cdot\\*\cdot
		\underbrace{\left(\frac{n-4}{n}\cdot\frac{n-5}{n-1}\right)^{n-3}}_{\substack{\text{No other $i\in\I$ ranks}\\\text{$a$/$b$/$c$/$d$ first or second}}}
		\xrightarrow[n\to\infty]{}
		\left(1-\frac{1}{e}\right)^2\cdot\frac{1}{e^8}
		\end{multline*}
		Since this is the same probability calculated above for \cref{some-sophisticated-then-bm-da-prefer-bm}, the proof of \cref{some-sophisticated-then-bm-da-prefer-da} is completed similarly to that of \cref{some-sophisticated-then-bm-da-prefer-bm}.
	\end{proof}

\section{Incomplete Information about Sophistication} \label{apndx:incomplete}
In \cref{some-sophisticated-then-advantage-and-disadvantage} we assumed that each player's probability of being sincere is $p$ independently of other students, and that the realizations of these draws (i.e., whether each student is sincere or sophisticated) are common knowledge. Furthermore, we ignored the issue of reputation (that is, the probability of being sincere) constituting a property of \emph{populations} rather than of \emph{individuals}, and just assumed that no two students belong to the same population. We have made these assumptions to ease the presentation, yet the result extends even when we relax them significantly. In this \lcnamecref{apndx:incomplete}, we present a more robust version of \cref{some-sophisticated-then-advantage-and-disadvantage} in which each student belongs to a population, each population is characterized by a probability of being sincere (and this probability is not too high), and each player's realized sophistication level is private information.

We consider the following ``population model'' for generating a market. We assume that there are $M$ populations. Each student belongs to one population independently at random, and her probability of belonging to population $\nu$ is $w_\nu \in (0,1)$; the realization of this draw for each student is common knowledge. Conditional on her being a member of population $\nu$, each student is independently at random determined to be sincere with probability $p_{\nu}$, and sophisticated otherwise; the realization of this draw is private and is known to no other student.

\begin{theorem}
\label{incomplete-info-sincere-vs-sophisticated}
Let\footnote{As with \cref{some-sophisticated-then-advantage-and-disadvantage}, it is straightforward to generalize this result to arbitrary values of $k$, and even to preferences of unbounded length.} $k=2$, let $u_1 > u_2 > \frac{2}{3}\cdot u_1 > 0$, and set $t^* = \frac{3(u_1 - u_2)}{u_1}$. For any population model where $p_\nu<t^*$ for every~$\nu$, there exists $\tau>0$ such that for any large enough $n$, in the uniform $(n; u^1,u^2)$ population model both of the following hold:	

\begin{enumerate}
\item\label[part]{incomplete-info-sincere-vs-sophisticated-prefer-sophisticated}
There exists a set of students of expected\footnote{The expectation is taken over preferences and population associations.} size at least $\tau n$, such that each student $i$ that belongs to any population $\nu$ in this set ex ante strictly prefers\footnote{That is, the strict preference is in expectation over her and others' sophistication.} any equilibrium had all the students in $\nu$ been sophisticated and had that been common knowledge\footnote{In such equilibria sophisticated students and all students in $\nu$ strategize, while all other students are truthtelling.} to any equilibrium (in which each student in $\nu$ is sincere with probability $p_\nu$ and this probability is common knowledge\footnote{In such equilibria sophisticated students strategize, while all other students (including those in $\nu$ that are not sophisticated) are truthtelling.}). Furthermore, each student in this set maintains this strict preference regardless of whether other populations' probabilities of being sincere change.
			
\item\label[part]{incomplete-info-sincere-vs-sophisticated-prefer-sincere}
For any equilibrium (in which each student that belongs to any population $\nu$ is sincere with probability $p_\nu$ and this is common knowledge, and in which only sophisticated students strategize),
there exists a set of students of expected\footnote{The expectation is taken over preferences and population associations (for any given mapping from realized preferences and population associations, to  equilibria).} size at least $\tau n$, such that each student $i$ that belongs to any population~$\nu$ in this set ex ante strictly prefers\footnote{That is, the strict preference is once again in expectation over her and others' sophistication.} any equilibrium had all students in $\nu$ been sincere and had that been common knowledge\footnote{In such equilibria sophisticated students, but not students in $\nu$, strategize, while all other students, including all students in $\nu$, are truthtelling.} to the given equilibrium. Furthermore, each student in this set maintains this strict preference regardless of whether or not any other students in this set individually have different probabilities of being sincere and regardless of whether or not other populations' probabilities of being sincere decrease.
\end{enumerate}
\end{theorem}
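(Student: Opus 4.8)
The plan is to treat the two parts separately, in each case re-using the ``local gadget'' of the corresponding complete-information result --- \cref{all-sincere-then-prefer-sophisticated} for \cref{incomplete-info-sincere-vs-sophisticated-prefer-sophisticated}, and \cref{five-vs-six}/\cref{all-sophisticated-then-prefer-sincere} for \cref{incomplete-info-sincere-vs-sophisticated-prefer-sincere} --- and replacing the complete-information equilibrium analysis inside the gadget by a Bayesian one. The structural fact that makes everything work is that the auxiliary students in these gadgets rank truthfully no matter what: a sincere student does by definition, and a sophisticated student whose second choice is guaranteed (not demanded first or second by anyone else) does regardless of her beliefs. Hence every gadget is sealed off from the rest of the market, and each gadget subgame reduces to a small Bayesian game among one or two designated ``leaders'' whose payoffs depend only on the leaders' own actions. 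For the counting I would argue exactly as in the proofs of \cref{some-sophisticated-then-bm-da} and \cref{all-sophisticated-then-prefer-sincere}: for a fixed student, with probability bounded away from zero her first choice has precisely the local neighborhood the gadget requires (the right number of first-choice competitors, each with a suitably placed isolated second choice, nothing else touching the relevant schools), and --- for the second part --- her gadget-partner lies in a different population (a constant-probability event since $M\ge 2$ and each $w_\nu<1$); each such student thus contributes, and since she is pinned to her own first choice she is not double-counted, giving $\Omega(n)$ candidates in expectation.

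For \cref{incomplete-info-sincere-vs-sophisticated-prefer-sophisticated} I would use the ``pure-disadvantage'' gadget of \cref{all-sincere-then-prefer-sophisticated}: the leader $z$ has first choice $a$ contested by $z$ and two truthful students, and second choice $b$ contested by $z$ and one truthful student. A sincere $z$ obtains $\nicefrac{u_1}{3}$; a sophisticated $z$ shades to $b$ --- optimal because $u_2>\nicefrac{2}{3}\cdot u_1$ --- and obtains $\nicefrac{u_2}{2}>\nicefrac{u_1}{3}$. Since the auxiliaries rank truthfully whatever their types, beliefs, or populations, $z$'s utility depends only on her own action, so in the counterfactual where her population $\nu$ is commonly known to be all-sophisticated she gets $\nicefrac{u_2}{2}$, while in the $p_\nu$-sincere equilibrium she gets $p_\nu\cdot\nicefrac{u_1}{3}+(1-p_\nu)\cdot\nicefrac{u_2}{2}<\nicefrac{u_2}{2}$; robustness to changes in other populations' rates is immediate since nothing outside the gadget enters. (This part does not actually use the hypothesis $p_\nu<t^*$.)

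For \cref{incomplete-info-sincere-vs-sophisticated-prefer-sincere} I would use the crowding-out gadget of \cref{five-vs-six}: a school $a$ with two strategic leaders $z,y$, each with a second choice contested by one truthful student, plus one truthful first-choice competitor $x$ for $a$, everything else sealed off. The crucial computation is that a sophisticated leader strictly prefers to ``fight'' for $a$ (report truthfully) over ``shading'' to her second choice precisely when her opponent reaches $a$ with probability at most $t^*=\nicefrac{3(u_1-u_2)}{u_1}$; this is where both $u_2>\nicefrac{2}{3}\cdot u_1$ (equivalently $t^*<1$) and the hypothesis $p_\nu<t^*$ are used. Consequently, if one leader is commonly known to be sincere --- hence reaches $a$ with probability $1>t^*$ --- the other's unique best response is to shade, and the sincere leader then wins $a$ in a two-way tie with $x$ except with her opponent's sincerity probability $p_\mu<t^*$, for a payoff strictly above $\nicefrac{u_2}{2}$. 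One then enumerates the only three Bayesian equilibria of the gadget subgame --- ``$z$ fights, $y$ shades'', its mirror image, and the fully mixed one in which each leader reaches $a$ with probability exactly $t^*$ --- and checks that in every one of them at least one of $z,y$ strictly prefers the commonly-known-sincere version of her own population to the given equilibrium: in the mixed equilibrium both do, and in each pure equilibrium the shader does, her equilibrium payoff being $p_\nu\cdot\nicefrac{u_1}{3}+(1-p_\nu)\cdot\nicefrac{u_2}{2}<\nicefrac{u_2}{2}$. Since the gadgets partition the candidate leaders into pairs, selecting the beneficiary of each pair (for the given equilibrium) yields $\Omega(n)$ beneficiaries. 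The monotone-robustness clause follows because lowering another population's sincerity rate only weakly raises the chosen leader's counterfactual payoff and does not raise her worst-case equilibrium payoff, while preserving existence of the relevant equilibria.

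The main obstacle is this Bayesian equilibrium analysis of the crowding-out gadget: verifying that the subgame has exactly the three equilibria listed (the threshold being $t^*$ together with $p_\nu,p_\mu<t^*$ rules out asymmetric mixed ones), computing each leader's ex-ante payoff in each equilibrium and in the two relevant counterfactuals, and confirming the dichotomy for every equilibrium --- and then checking that the adversary's freedom to decide which leader shades still leaves, across the $\Omega(n)$ disjoint gadgets, $\Omega(n)$ distinct beneficiaries (which holds because each candidate is a leader of exactly the gadget around her own first choice, so the gadgets genuinely partition the candidates). A secondary subtlety is the requirement, in the crowding-out gadget, that the two leaders lie in different populations, so that the counterfactual ``all of $\nu$ sincere'' truly forces the opponent to shade rather than merely piling an extra sincere competitor onto $a$.
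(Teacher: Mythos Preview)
Your proposal is correct and follows essentially the same approach as the paper's proof: you reuse the gadget from \cref{all-sincere-then-prefer-sophisticated} for \cref{incomplete-info-sincere-vs-sophisticated-prefer-sophisticated} and the gadget from \cref{five-vs-six}/\cref{all-sophisticated-then-prefer-sincere} for \cref{incomplete-info-sincere-vs-sophisticated-prefer-sincere}, observe that the auxiliary students rank truthfully regardless of type, require the two leaders $z,y$ to lie in distinct populations, and then carry out the Bayesian equilibrium analysis of the two-leader subgame with threshold $t^*$---exactly as the paper does (the paper phrases the case split as five cases, ruling out ``both fight'' via $t^*<1$ and ``both shade'' via $p_\nu<t^*$, which collapses to your three equilibria). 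Your ex-ante payoff computation for the shader, $p_\nu\cdot\nicefrac{u_1}{3}+(1-p_\nu)\cdot\nicefrac{u_2}{2}$, is in fact slightly more careful than the paper's, which records only the sophisticated type's payoff $\nicefrac{u_2}{2}$; both are bounded above by $\nicefrac{u_2}{2}$, so the comparison with the counterfactual payoff $(\nicefrac{1}{2}-\nicefrac{p_{\nu_y}}{6})\,u_1$ goes through either way.
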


\begin{proof}
For \cref{incomplete-info-sincere-vs-sophisticated-prefer-sophisticated}, we can still use \cref{all-sincere-then-prefer-sophisticated}. In its proof it is of no importance whether students $y$, $x$, and $w$ are sophisticated or not, as each reports truthfully regardless. Since $z$ strictly prefers being sophisticated to being sincere, she also prefers being sophisticated to being sincere with any positive probability.

\medskip

For \cref{incomplete-info-sincere-vs-sophisticated-prefer-sincere}, we slightly modify the argument in \cref{all-sophisticated-then-prefer-sincere}. It is once again of no importance whether $x$, $w$, and $v$ are sophisticated or not, as each reports truthfully regardless. As for $z$ and $y$, we need them to belong to different populations, which we denote by $\nu_z$ and $\nu_x$ respectively (this happens with fixed probability $w_{\nu_z} \cdot w_{\nu_y}$, summed over all possible choices for distinct $\nu_z$ and $\nu_y$). The reason for having $z$ and $y$ belong to two different populations is that we do not want their sophistication levels to be interdependent when we change the sincerity probability of the population of one of them to $1$. 

In any equilibrium, within any gadget that contains players $z$, $y$, $x$, $w$, and $v$, there are theoretically five possible ways in which the sophisticated types of players $z$ and $y$ can play:
\begin{equilibria}
\item\label{eq-app-1-truthful} The sophisticated type of $z$ ranks truthfully, and the sophisticated type of $y$ ranks her second choice at the top.
\item\label{eq-app-2-truthful} The sophisticated type of $y$ ranks truthfully, and the sophisticated type of $z$ ranks her second choice at the top.
\item\label{eq-app-mix} The sophisticated types of $z$ and $y$ play a mixed strategy.
\item\label{eq-app-both-truthful} The sophisticated types of $z$ and $y$ rank truthfully.
\item\label{eq-app-both-untruthful} The sophisticated types of $z$ and $y$ rank their second choice at the top.
\end{equilibria}

\cref{eq-app-both-truthful} is ruled out by the restriction that $u_2 > \frac{2}{3} u_1$. \cref{eq-app-both-untruthful} is ruled out by our assumption that $p_\nu < t^*$. Indeed, if the sophisticated types of the two players had reported their second choice first, then the sophisticated type of player $z$ would have received utility $\frac{1}{2}u_2$, while by reporting truthfully they could have received utility
\begin{equation*}
p_{\nu_y} \cdot \frac{1}{3}u_1 + (1-p_{\nu_y}) \cdot \frac{1}{2}u_1 = \left(\frac{1}{2} - \frac{1}{6} p_{\nu_y}\right)\cdot u_1  > \left(\frac{1}{2} - \frac{1}{6} \cdot \frac{3(u_1-u_2)}{u_1} \right)\cdot u_1  = \frac{1}{2} u_2.
\end{equation*}

Under \cref{eq-app-1-truthful,eq-app-2-truthful}, either $z$ or $y$ strictly prefers to become sincere. Indeed, whichever one of the two currently reports her second choice first gets utility $\frac{1}{2}u_2$, and by becoming sincere she will made the sophisticated type of the other player switch to reporting her own second choice first, and will then get utility $\left(\frac{1}{2} - \frac{1}{6} p_{\nu_y}\right)\cdot u_1$, which is again higher than~$\frac{1}{2}u_2$.

Finally, we turn to analyzing \cref{eq-app-mix}. The mixed equilibrium in this case is for the sophisticated type of each player $i \in \{z,y\}$ to report truthfully with probability $\frac{t^* - p_{\nu_i}}{1 - p_{\nu_i}}$. This causes each player to face a truthful opponent with probability $t^*$, which leaves her indifferent between reporting truthfully and reporting her second choice first. In particular, this also means that the sincere type of each player gets the same utility as her sophisticated type, and so both of these types get utility $\frac{1}{2}u_2$. Therefore, as in \cref{eq-app-1-truthful,eq-app-2-truthful}, $z$ and $y$ strictly prefer to become sincere.
\end{proof}

\section{Calculation for Example~\ref{sincerity_bad}} \label{apndx:sincerity_bad}

In this \lcnamecref{apndx:sincerity_bad}, we will estimate $p$, as defined in \cref{sincerity_bad}. Thus, we assume that $i_1$ is sincere (and so reports $s_1\succ s_2\succ s_3$), and focus on analyzing the utility of $i_2$.
We first claim that for large enough $n$, it holds that $p\ge\nicefrac{1}{2}$. Indeed, if it were the case that $p<\nicefrac{1}{2}$ for such $n$, then the expected utility of $u_2$ from listing $s_1$ first would be strictly higher than her expected utility from listing $s_2$ first---a contradiction. Fix $m=n-2$ and $\delta=\nicefrac{1}{100}$. Applying a Chernoff bound \cite[see, e.g.,][p.~67]{MitzenmacherUpfalBook}, we have that the number $a_1$ of students among  $i_3,\ldots,i_n$ who rank $s_1$ first satisfies
\begin{equation}\label{chernoff}
\Pr\bigl[a_1\ge (1+\delta)\cdot mp\bigr]\le \exp\left(-\frac{\delta^2}{3}\cdot pm\right) \le \exp\left(-\frac{\delta^2}{6}\cdot m\right) = \exp\left(-\alpha\cdot m\right),
\end{equation}
where we have set $\alpha=\nicefrac{-\delta^2}{6}$.

We begin by estimating the utility of $i_2$ from reporting $s_1\succ s_3\succ s_2$. By \cref{chernoff}, this expected utility is \emph{at least}
\begin{multline*}
e^{-\alpha m}\cdot 0 + (1-e^{-\alpha m})\cdot\left(\vphantom{\left(\frac{1}{\delta p}\right)}\right.\hspace{-1.4em}\overbrace{\frac{1}{2+m\cdot(1+\delta)p}}^{\substack{\text{Probability that $i_2$ is admitted}\\\text{to $s_1$ when $m\cdot(1+\delta)p$}\\\text{sophisticated students plus $i_1$}\\\text{also apply in addition to $i_2$}}}\left.\hspace{-1.4em}\cdot9+\left(1-\frac{1}{2+m\cdot(1+\delta)p}\right)\cdot\frac{n-3}{n-2}\cdot\frac{1}{2(n-3)}\right)
\ge\\\ge
\frac{9}{2+m\cdot(1+\delta)p}+\left(1-\frac{2}{m}\right)\cdot\left(1-\frac{1}{m}\right)\cdot\frac{1}{2(m-1)}-9e^{-\alpha m}
\ge\\\ge
\frac{9}{2+m\cdot(1+\delta)p}+\left(1-\frac{3}{m}\right)\cdot\frac{1}{2(m-1)}-9e^{-\alpha m}.
\end{multline*}
We now estimate the utility of $i_2$ from reporting $s_2\succ s_3\succ s_1$. By \cref{chernoff}, this expected utility is \emph{at most}
\begin{multline*}
e^{-\alpha m}\cdot 1 + (1-e^{-\alpha m})\cdot\left(\vphantom{\left(\frac{1}{\delta p}\right)}\right.\hspace{-.2em}\overbrace{\frac{1}{1+m\cdot\bigl(1-(1+\delta)p\bigr)}}^{\substack{\text{Probability that $i_2$ is admitted}\\\text{to $s_2$ when $m\cdot\bigl(1-(1+\delta)p\bigr)$}\\\text{sophisticated students}\\\text{also apply in addition to $i_2$}}}\left.\hspace{-.2em}+\!\left(1-\frac{1}{1+m\cdot\bigl(1-(1+\delta)p\bigr)}\right)\!\cdot\frac{1}{2(n-3)}\right)
 \le\\\le
 e^{-\alpha m} + \frac{1}{1+m\cdot\bigl(1-(1+\delta)p\bigr)}+
 \frac{1}{2(m-1)}.
 \end{multline*}

As these two expected utilities are equal (since $p$ is the mixing probability in equilibrium), combining the two estimates above we get
\[ \frac{9}{2+m\cdot(1+\delta)p} \le 10e^{-\alpha m}+\frac{1}{1+m\cdot\bigl(1-(1+\delta)p\bigr)}+
\frac{3}{2m(m-1)}
. \]
Therefore,
\begin{multline*}
9+9m\cdot\bigl(1-(1+\delta)p\bigr)\le\\
\bigl(2+m\cdot(1+\delta)p\bigr)\cdot\Bigl(1+m\cdot\bigl(1-(1+\delta)p\bigr)\Bigr)\cdot\left(10e^{-\alpha m}+\frac{3}{2m(m-1)}\right) + 2+m\cdot(1+\delta)p
\le\\\le
10(m+2)(m+1)\cdot e^{-\alpha m} + 4+m\cdot(1+\delta)p
\end{multline*}
for $n$ large enough (and therefore $m$ large enough) such that $\frac{(m+2)(m+1)}{m(m-1)}\le\nicefrac{4}{3}$. Therefore,
\[5-10(m+2)(m+1)\cdot e^{-\alpha m} + 9m\le10mp\cdot(1+\delta),\]
and so, for $n$ large enough (and therefore $m$ large enough) such that $10(m+2)(m+1)\cdot e^{\alpha m}\le5$, we have that
\[9m\le10mp\cdot(1+\delta).\]
Therefore,
\[\frac{9}{10\cdot(1+\delta)}\le p,\]
and substituting $\delta=\nicefrac{1}{100}$, we obtain that for large enough $n$ that $p\ge\frac{90}{101}$, as claimed.\qed

\clearpage

\bibliographystyle{plainnat}
\bibliography{boston-weak-priorities}

\end{document}